\documentclass[oribibl,orivec]{llncs}

\usepackage{hyperref}
\urlstyle{same}
\usepackage{amsmath}
\usepackage{amssymb}
\usepackage{amsfonts}
\usepackage{graphicx}
\usepackage{mathrsfs}

\let\doendproof\endproof
\renewcommand\endproof{\hfill$\qed$\doendproof}

\newcommand{\overbar}[1]{\mkern 1.5mu\overline{\mkern-1.5mu#1\mkern-1.5mu}\mkern 1.5mu}
\newcommand{\Comp}[1]{\ensuremath{\overbar{#1}}}
\newcommand{\Rev}[1]{\ensuremath{\widetilde{#1}}}
\newcommand{\Back}[1]{\ensuremath{\widehat{#1}}}

\newcommand{\Factor}{\ensuremath{\preceq}}
\newcommand{\Prefix}{\ensuremath{\preceq_{\rm{pre}}}}
\newcommand{\Suffix}{\ensuremath{\preceq_{\rm{suff}}}}
\newcommand{\Affix}{\ensuremath{\preceq_{\rm{aff}}}}
\newcommand{\Middle}{\ensuremath{\preceq_{\rm{mid}}}}

\newcommand{\Mirror}{\ensuremath{\preceq_{\rm{mir}}}}
\newcommand{\Admissible}{\ensuremath{\preceq_{\rm{adm}}}}

\newcommand{\Bou}[1]{\ensuremath{\mathcal{B}(#1)}}

\newcommand{\Up}{\textbf{u}}
\newcommand{\Right}{\textbf{r}}
\newcommand{\Down}{\textbf{d}}
\newcommand{\Left}{\textbf{l}}

\pagestyle{plain}

\begin{document}

\title{An Optimal Algorithm for Tiling the Plane\\with a Translated Polyomino} 
\titlerunning{Tiling with a Translated Polyomino}

\author{Andrew Winslow}
\authorrunning{Andrew Winslow}
\tocauthor{Andrew Winslow}

\institute{Universit\'{e} Libre de Bruxelles, 1050 Bruxelles, Belgium,\\
\email{andrew.winslow@ulb.ac.be}}

\maketitle

\begin{abstract}
We give a $O(n)$-time algorithm for determining whether translations of a polyomino with $n$ edges can tile the plane.
The algorithm is also a $O(n)$-time algorithm for enumerating all regular tilings, and we prove that at most $\Theta(n)$ such tilings exist.
\end{abstract}

\section{Introduction}

A \emph{plane tiling} is a partition of the plane into shapes each congruent to a fixed set of \emph{tiles}. 
As the works of M. C. Escher attest, plane tilings are both artistically beautiful and mathematically interesting (see~\cite{Schattschneider-1990} for a survey of both aspects).
In the 1960s, Golomb~\cite{Golomb-1965} initiated the study of \emph{polyomino} tiles: polygons whose edges are axis-aligned and unit-length.

Building on work of Berger~\cite{Berger-1966}, Golomb~\cite{Golomb-1970} proved that no algorithm exists for determining whether a set of polyomino tiles has a plane tiling.
Ollinger~\cite{Ollinger-2009} proved that this remains true even for sets of at most 5~tiles.
It is a long-standing conjecture that there exists an algorithm for deciding whether a single tile admits a plane tiling (see~\cite{Strauss-2000,Strauss-2010}) 

Motivated by applications in parallel computing, Shapiro~\cite{Shapiro-1978} studied tilings of polyomino tiles on a common integer lattice using translated copies of a polyomino.
For the remainder of the paper, only these tilings are considered.
Ollinger~\cite{Ollinger-2009} proved that no algorithm exists for determining whether sets of at most 11~tiles admit a tiling, while Wijshoff and van Leeuwen~\cite{Wijshoff-1984} obtained a polynomial-time-testable criterion for a single tile to admit a tiling.
Beauquier and Nivat~\cite{Beauquier-1991,Girault-1990} improved on the result of Wijshoff and van Leeuwen by giving a simpler criterion called the \emph{Beauquier-Nivat criterion}.

Informally, a tile satisfies the Beauquier-Nivat criterion if it can be surrounded by copies of itself (see Figure~\ref{fig:surrounding}).
Such a surrounding must correspond to a \emph{regular} tiling (also called \emph{isohedral}) in which all tiles share an identical neighborhood.
Using a naive algorithm, the Beauquier-Nivat criterion can be applied to a polyomino with $n$ vertices in $O(n^4)$ time. 

\begin{figure}[ht]
\centering
\includegraphics[scale=1.0]{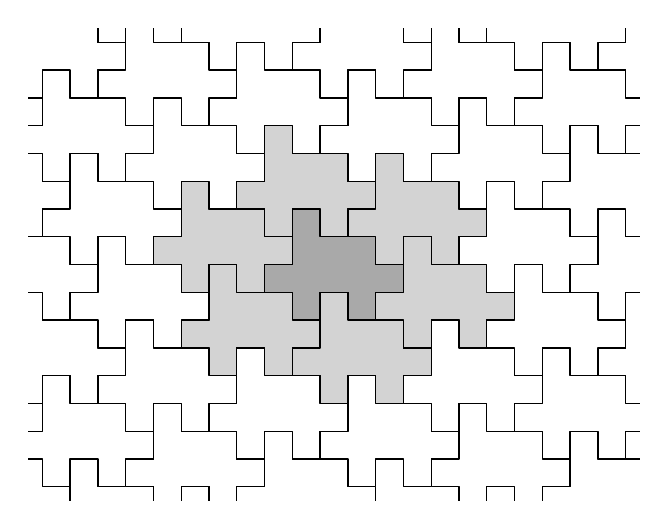}
\caption{A polyomino tile (dark gray), a surrounding of the tile (gray), and the induced regular tiling (white).}
\label{fig:surrounding}
\end{figure}

The $O(n^4)$ algorithm of~\cite{Beauquier-1991} is implicit; the main achievement of~\cite{Beauquier-1991} is a concise characterization of exact tiles, akin to Conway's criterion (see~\cite{Schattschneider-1980}).
Gambini and Vuillon~\cite{Gambini-2007} gave an improved $O(n^2)$-time algorithm utilizing structural and algorithmic results on words describing boundaries of polyominoes.
Around the same time, Brlek, Proven\c{c}al, and F\'{e}dou~\cite{Brlek-2006,Brlek-2009a} also used a word-based approach to achieve $O(n)$-time algorithms for two special cases: (1) the boundary contains no consecutive repeated sections larger than $O(\sqrt{n})$, and (2) testing a restricted version of the Beauquier-Nivat criterion (surroundable by just four copies).
Proven\c{c}al~\cite{Provencal-2008} further improved on the algorithm of Gambini and Vuillon for the general case, obtaining $O(n\log^3(n))$ running time.
In a recent survey of the combinatorics of Escher's tilings, Blondin Mass\'{e}, Brlek, and Labb\'{e}~\cite{Blondin-2010b} conjecture that a $O(n)$-time algorithm exists.
In this work, we confirm their conjecture by giving such an algorithm (Theorem~\ref{thm:linear-time-iso-enum}).

The algorithm doubles as an algorithm for enumerating all surroundings (regular tilings) of the polyomino.
As part of the proof of the algorithm's running time, we prove a claim of Proven\c{c}al~\cite{Provencal-2008} that the number of surroundings of a tile with itself is $O(n)$ (Corollary~\ref{cor:linear-tilings}).
This complements the tight bounds on a special class of surroundings by Blondin Mass\'{e} et al.~\cite{Blondin-2009,Blondin-2010a}, and proves that our $O(n+k)$-time algorithm for enumerating all $k$ surroundings (Lemma~\ref{lem:translation-linear-enumerate}) is also a $O(n)$-time algorithm.

\section{Definitions}
\label{sec:definitions}

Here we give precise formulations of terms used throughout the paper. 
The definitions are similar to those of Beauquier and Nivat~\cite{Beauquier-1991} and Brlek et al.~\cite{Brlek-2009a}.

\subsection{Words}

A \emph{letter} is a symbol $x \in \Sigma = \{\Up, \Down, \Left, \Right\}$.
The \emph{complement} of a letter $x$, written $\Comp{x}$, is defined by the following bijection on $\Sigma$: $\Comp{\Up} = \Down$, $\Comp{\Right} = \Left$, $\Comp{\Down} = \Up$, and $\Comp{\Left} = \Right$. 

A \emph{word} is a sequence of letters and the \emph{length} of a word $W$, denoted $|W|$, is the number of letters in $W$.
For an integer $i \in \{1, 2, \dots, |W|\}$, $W[i]$ refers to the $i$th letter of $W$ and $W[-i]$ refers to the $i$th from the last letter of $W$.
The notation $l^k$ or $W^k$ denotes the word consisting of $k$ repeats of a letter $l$ or word $W$, respectively.

There are several functions mapping a word $W$ to another word of the same length.
The \emph{complement} of $W$, written $\Comp{W}$, is the word obtained by replacing each letter of $W$ with its complement.
The \emph{reverse} of $W$, written $\Rev{W}$, are the letters of $W$ in reverse order.
The \emph{backtrack} of $W$, written $\Back{W}$, is defined as $\Back{W} = \Comp{\Rev{W}}$.
Note that for any two words $X$ and $Y$, $\Back{AB} = \Back{B}\Back{A}$.

\subsection{Factors}

A \emph{factor of $W$} is an occurrence of a word in $W$, written $X \Factor W$.
For integers $1 \leq i, j \leq |W|$ with $i \leq j$, $W[i..j]$ denotes the factor of $W$ from $W[i]$ to $W[j]$, inclusive.
A factor $X$ \emph{starts} or \emph{ends} at $W[i]$ if $W[i]$ is the first or last letter of $X$, respectively.

Two factors $X, Y \Factor W$ may refer the same letters of $W$ or merely have the same letters in common.
In the former case, $X$ and $Y$ are \emph{equal}, written $X = Y$, while in the latter, $X$ and $Y$ are \emph{congruent}, written $X \equiv Y$.
For instance, if $W = \Up \Up \Up \Left \Right \Up \Up \Up$ then $W[1..3] \equiv W[6..8]$.
A \emph{factorization} of $W$ is a partition of $W$ into consecutive factors $F_1$ through $F_k$, written $W = F_1 F_2 \dots F_k$.

\subsection{Special words and factors}

A word $X$ is a \emph{prefix} or \emph{suffix} of a word $W$ provided $W = XU$ or $W = UX$, respectively. 
A word \emph{$X$ is a period of $W$} provided $|X| \leq |W|$ and $W$ is a prefix of $X^k$ for some $k \geq 1$ (introduced in~\cite{Knuth-1977}).
Alternatively, $X$ is a prefix of $W$ and $W[i] = W[i+|X|]$ for all $1 \leq i \leq |W|-|X|$.

A factor $X \Factor W$ is a \emph{prefix} if $X$ starts at $W[1]$, written $X \Prefix W$.
Similarly, $X \Factor W$ is a \emph{suffix} if $X$ ends at $W[-1]$, written $X \Suffix W$.
A factor $X \Factor W$ that is either a prefix or suffix is an \emph{affix}, written $X \Affix W$.
A factor $X \Factor W$ that is not an affix is a \emph{middle}, written $X \Middle W$.

The factor $X \Factor W$ such that $W = UXV$, $|U|=|V|$, and $|X| \in \{1, 2\}$ is the \emph{center of $W$}.
A factor $X \Factor W$ is a \emph{mirror}, written $X \Mirror W$, provided $W = XUYV$ with $Y \equiv \Back{X}$ and $|U|=|V|$.
For any $X \Mirror W$, $\Back{X}$ refers to the factor $Y$ in the definition.

A mirror factor is \emph{admissible} provided $U[1] \neq \Comp{U[-1]}$, $V[1] \neq \Comp{V[-1]}$.
Observe that each admissible factor is the maximum-length mirror factor with its center.
Thus any two admissible factors have distinct centers.

\subsection{Polyominoes and boundary words}

A \emph{cell} is a unit square with lower-leftmost vertex $(x, y) \in \mathbb{Z}^2$ and remaining vertices $(x+1, y)$, $(x, y+1)$, $(x+1, y+1)$.
A \emph{polyomino} is a simply connected union of cells whose boundary is a simple closed curve.

The boundary of a polyomino consists of cell edges.
The \emph{boundary word} of a polyomino $P$, denoted $\Bou{P}$, is the circular word of letters corresponding to the sequence of directions traveled along cell edges during a clockwise traversal of the polyomino's boundary (see Figure~\ref{fig:regularity}).

Boundary words are \emph{circular}: the last and first letters are defined to be consecutive.
Thus for any indices $i, j \in \mathbb{Z} \setminus \{0\}$, $W[i]$ and $W[i..j]$ are defined.
For the boundary word $W = \Up \Right \Right \Down \Left \Left$, $W[10] = W[-9] = \Down$ and $W[6..2] = \Left \Up \Right$.

\begin{figure}[ht]
\centering
\includegraphics[scale=1.0]{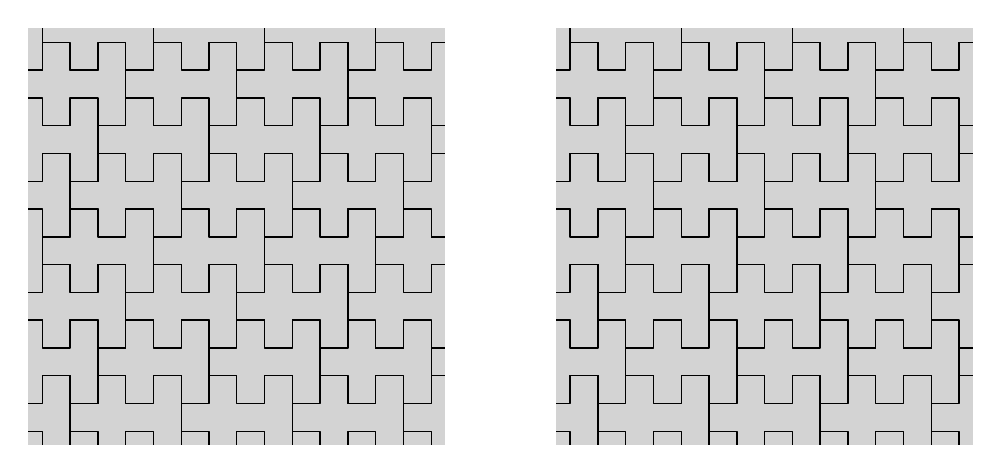}
\caption{A regular tiling (left) and non-regular tiling (right) of a polyomino with boundary word $\Up \Right \Up \Right \Down \Right \Up \Right \Down^3 \Left \Up \Left \Down \Left \Up \Left$. 
The copies in the regular tiling have a common neighborhood factorization $ABC\Back{A}\Back{B}\Back{C}$, with $A = \Up$, $B = \Right \Up$, $C = \Right \Down \Right \Up \Right \Down$.
}
\label{fig:regularity}
\end{figure}

\subsection{Tilings}

For a polyomino $P$, a \emph{tiling} of $P$ is an infinite set $\mathscr{T}$ of translations of $P$, called \emph{copies}, such that every cell in the plane is in exactly one copy.
A tiling is \emph{regular} (e.g. \emph{isohedral}) provided there exist vectors $\vec{o}, \vec{u}, \vec{v}$ such that the set of lower-leftmost vertices of copies in the tiling is $\vec{o} + \{i\vec{u} + j\vec{v} : i, j \in \mathbb{Z}\}$.
Two tilings $\mathcal{T}$ and $\mathcal{T}'$ are equal provided there exists a vector $\vec{v}$ such that $\mathcal{T}' = \vec{v} + \mathcal{T}$.

Copies of a tiling intersect only along boundaries, and copies with non-empty boundary intersection are \emph{neighbors}.
Lemma 3.5 of~\cite{Wijshoff-1984} implies that the intersection between a pair of neighbors corresponds to a \emph{neighbor factor} of each neighbor's boundary word and these factors form a \emph{neighborhood factorization}.
Every regular tiling has a neighbor factorization common to all copies in the tiling.

\section{The Beauquier-Nivat Criterion}
\label{sec:translation-crit}

Recall that $\Back{X}$ is the reverse complement of $X$.
Thus $\Back{X}$ is the same path as $X$ but traversed in the opposite direction.
So any pair of factors $X$ and $\Back{X}$ appearing on the boundary of a polyomino are translations of each other with the interior of the boundary on opposite sites.
Beauquier and Nivat~\cite{Beauquier-1991} gave the following criterion for determining whether a polyomino tile admits a tiling: 

\begin{definition}
A factorization $W = ABC\Back{A}\Back{B}\Back{C}$ of a boundary word $W$ is a \emph{BN factorization}.
\end{definition}

\begin{lemma}[Theorem 3.2 of~\cite{Beauquier-1991}]
\label{lem:tiling-translation}
A polyomino $P$ has a tiling if and only if $\Bou{P}$ has a BN factorization.
\end{lemma}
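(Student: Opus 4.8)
The plan is to prove the two implications separately. Sufficiency is a construction: from a BN factorization one reads off a rank-2 lattice of translations and checks that the resulting family of copies is a tiling. Necessity is an extraction: from a tiling one analyses the contacts of a single copy with its neighbours and reassembles a BN factorization of $\Bou{P}$.

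For sufficiency, assume $\Bou{P} = ABC\Back{A}\Back{B}\Back{C}$. Fix a base point for the traversal and let $p_0, p_1, \dots, p_5$ be the lattice points reached after reading the prefixes $\varepsilon$, $A$, $AB$, $ABC$, $ABC\Back{A}$, $ABC\Back{A}\Back{B}$ (so $p_6 = p_0$), and set $\vec{A} = p_1 - p_0$, $\vec{B} = p_2 - p_1$, $\vec{C} = p_3 - p_2$. Since $\Back{X}$ is $X$ read backwards, its displacement is $-\vec{X}$, forcing $p_4 = p_0 + \vec{B} + \vec{C}$, $p_5 = p_0 + \vec{C}$, and $p_6 = p_0$ automatically. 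As recalled in Section~\ref{sec:translation-crit}, the arcs of $\partial P$ spelled $X$ and $\Back{X}$ are translates of one another with $P$ on opposite sides, so the copy $Q_A := P + \vec{t}_A$ with $\vec{t}_A := p_0 - p_4 = -(\vec{B}+\vec{C})$ meets $P$ exactly along the $A$-arc, the copy $Q_B := P + \vec{t}_B$ with $\vec{t}_B := \vec{A}-\vec{C}$ meets $P$ along the $B$-arc, and $Q_C := P + \vec{t}_C$ with $\vec{t}_C := \vec{A}+\vec{B}$ meets $P$ along the $C$-arc, while $P - \vec{t}_A$, $P - \vec{t}_B$, $P - \vec{t}_C$ meet $P$ along $\Back{A}, \Back{B}, \Back{C}$ respectively. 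One then verifies the single identity $\vec{t}_B = \vec{t}_A + \vec{t}_C$; together with the cyclic order of the factors this is precisely what makes consecutive contact vectors differ by a contact vector, i.e.\ what makes the six copies close up around $P$. Let $L$ be the lattice generated by $\vec{t}_A$ and $\vec{t}_C$ (which one checks are independent, else $P$ would have empty interior) and set $\mathscr{T} := \{P + \ell : \ell \in L\}$. To see $\mathscr{T}$ is a tiling, note that each edge of $\partial P$ lies in exactly one of $A, B, C, \Back{A}, \Back{B}, \Back{C}$ and hence is shared with exactly one of the six copies, so $P$ and these copies cover a neighbourhood of every edge-interior of $\partial P$ without overlap; at each vertex $p_i$ the identity $\vec{t}_B = \vec{t}_A + \vec{t}_C$ shows the incident copies exactly fill the angular neighbourhood. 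Thus $P$ is perfectly surrounded by its $L$-neighbours, and translating this local picture by all $\ell \in L$ and using connectedness of the plane shows every point is covered exactly once. (If one of $A, B, C$ is empty the argument is identical with four neighbours, $L$ generated by the two surviving vectors.)

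For necessity, let $\mathscr{T}$ be a tiling and fix a copy $P_0 \cong P$. By Lemma~3.5 of~\cite{Wijshoff-1984}, each neighbour $Q$ of $P_0$ meets it in a single arc, which is a neighbour factor $X_Q \Factor \Bou{P_0}$ that equals $\Back{X_Q}$ along $\partial Q$; these arcs partition $\Bou{P_0}$, overlapping only at vertices of $P_0$. If $Q = P_0 + \vec{t}$ is a neighbour then so is $P_0 - \vec{t}$, and its arc is congruent to $\Back{X_Q}$; hence the arcs occur in $\{X, \Back{X}\}$ pairs carrying opposite translation vectors. The crux is to show there are at most three such pairs and that, read cyclically around $\partial P_0$, the arcs appear in the order $X_1 X_2 X_3 \Back{X_1}\Back{X_2}\Back{X_3}$ (with some $X_i$ possibly empty). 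This follows from a discrete angle-counting argument: at a vertex of $P_0$ separating two consecutive arcs, the incident copies must exactly fill the remaining angular gap, which pins down the difference of the two consecutive contact vectors; iterating around $\partial P_0$ both bounds the number of distinct vectors by three and prevents two opposite pairs from interleaving. Setting $A = X_1$, $B = X_2$, $C = X_3$ then exhibits $\Bou{P}$ as a BN factorization.

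I expect the necessity direction — bounding the number of neighbour-arc pairs by three and ruling out interleaved cyclic orders — to be the main obstacle, since it is a genuinely geometric argument about how translated copies can surround a fixed copy, in contrast to the largely bookkeeping nature of the sufficiency construction.
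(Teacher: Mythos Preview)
This lemma is not proved in the present paper; it is quoted as Theorem~3.2 of Beauquier and Nivat~\cite{Beauquier-1991}, so there is no in-paper proof to compare against. I therefore assess your argument on its own terms and against what the paper does invoke from~\cite{Beauquier-1991}.

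Your sufficiency direction is essentially the standard construction and is fine as a sketch: the identity $\vec{t}_B = \vec{t}_A + \vec{t}_C$ is exactly what makes the six (or four) translates close up around $P$, and the passage from a local surrounding to a full periodic tiling is the routine step recorded as Corollary~3.1 of~\cite{Beauquier-1991} (and invoked in the proof of Lemma~\ref{lem:translation-bijection} here).

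The necessity direction, however, has a genuine gap. You assert that if $Q = P_0 + \vec{t}$ is a neighbour of $P_0$ in the tiling $\mathscr{T}$, then $P_0 - \vec{t}$ is also a neighbour, so that the neighbour arcs of $P_0$ come in $\{X,\Back{X}\}$ pairs. That implication is exactly what \emph{regularity} of $\mathscr{T}$ would give you, but the hypothesis is only that \emph{some} tiling exists. The paper's own Figure~\ref{fig:regularity} shows a non-regular tiling in which the set of neighbour translation vectors of a fixed copy is not closed under negation, and hence the neighbour factorization of that copy is not of the form $X_1X_2X_3\Back{X_1}\Back{X_2}\Back{X_3}$. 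Without the pairing, the subsequent steps (bounding the number of pairs by three, ruling out interleaving) never get off the ground. The actual content of the Beauquier--Nivat theorem is that from an arbitrary tiling one can nonetheless manufacture a symmetric surrounding; in~\cite{Beauquier-1991} this goes through an analysis of three copies meeting at a common point (cf.\ Lemma~\ref{lem:exact-contact} here), not by reading the surrounding directly off the neighbours of a single copy in the given tiling.
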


As seen in Figure~\ref{fig:translation-ex}, a BN factorization corresponds to the neighborhood factorization of a regular tiling. 
We prove this formally by reusing results from the proof of Lemma~\ref{lem:tiling-translation}.

\begin{figure}[ht]
\centering
\includegraphics[scale=1.0]{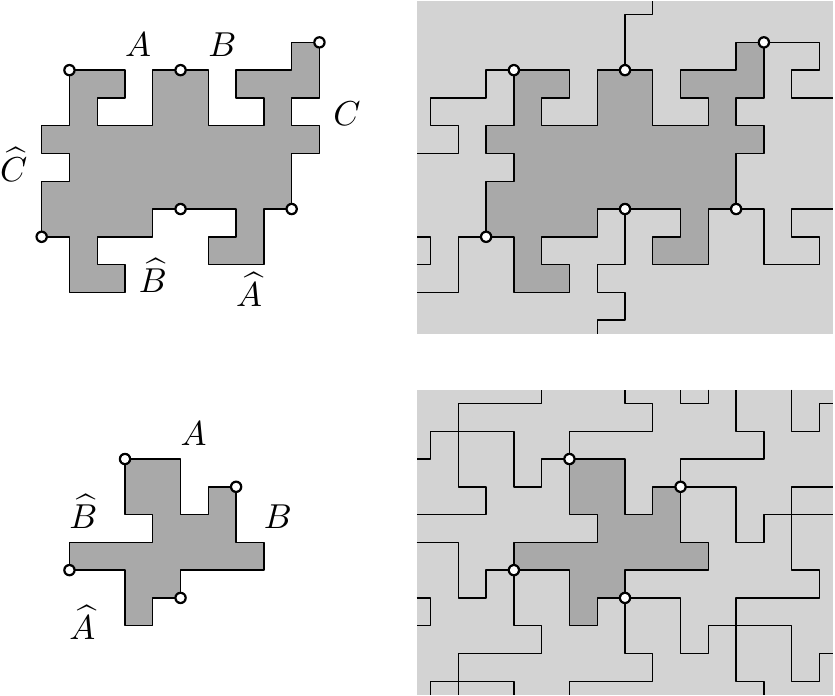}
\caption{BN factorizations (left) and the regular tilings induced by these factorizations (right).
For one polyomino (bottom), two of the factors are zero length.
However, no BN factorization can have more than two length-0 factors.}
\label{fig:translation-ex}
\end{figure}

\begin{lemma}[Corollary 3.2 of~\cite{Beauquier-1991}]
\label{lem:exact-contact}
Let $P$ be a polyomino.
There exists a factorization $\Bou{P} = F_1 \Back{F_3} F_2 \Back{F_1} F_3 \Back{F_2}$ if and only if there exists a tiling $\mathscr{T}$ of $P$ with three copies $P_1$, $P_2$, $P_3$ such that:
\begin{itemize}
\item $P_1, P_2, P_3$ appear clockwise consecutively around a common point $q$.
\item $F_i$ is the last neighbor factor of $P_i$ whose clockwise endpoint is incident to $q$.
\end{itemize}
\end{lemma}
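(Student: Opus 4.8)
The plan is to prove Lemma~\ref{lem:exact-contact} by establishing both directions of the equivalence through a direct geometric analysis of how three copies of $P$ meet at a common point, leveraging the machinery already developed in the proof of Lemma~\ref{lem:tiling-translation}. For the forward direction, I would begin with a factorization $\Bou{P} = F_1 \Back{F_3} F_2 \Back{F_1} F_3 \Back{F_2}$ and observe that, reading the boundary word cyclically, each pair $F_i, \Back{F_i}$ corresponds to a segment of $\partial P$ together with a translated copy traversed in the opposite direction. The key step is to place one copy $P_1$ with its boundary equal to $\Bou{P}$, then glue $P_2$ along the factor $\Back{F_1}$ of $P_1$ (which matches $F_1$ of $P_2$ after the appropriate translation), and similarly glue $P_3$ along $\Back{F_2}$ of $P_2$; the factorization identity forces $P_3$ to also share the factor $F_3$ with $P_1$ along $\Back{F_3}$, closing the cycle of three copies around a single point $q$. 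I would then identify $q$ as the shared clockwise endpoint of the relevant neighbor factors and verify the clockwise-consecutive ordering from the orientation conventions of the boundary traversal.

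For the converse, I would start from a tiling $\mathscr{T}$ with copies $P_1, P_2, P_3$ arranged clockwise consecutively around $q$, and let $F_i$ be the last neighbor factor of $P_i$ whose clockwise endpoint is incident to $q$. The argument here is to read off the boundary word of $P_1$ as a concatenation of neighbor factors (using the fact from Lemma 3.5 of~\cite{Wijshoff-1984}, invoked in the excerpt, that pairwise intersections are neighbor factors forming a neighborhood factorization), and to use the fact that $P_2$ is a translate of $P_1$ sharing exactly the factor corresponding to $F_1$ and $\Back{F_1}$. Tracking which neighbor factors of $P_1$ lie ``between'' its contacts with $P_2$ and with $P_3$, and using that $P_1, P_2, P_3$ surround $q$ with no fourth copy intruding at $q$, I would extract the six-factor pattern $F_1 \Back{F_3} F_2 \Back{F_1} F_3 \Back{F_2}$. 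The placement of the $\Back{F_j}$ factors follows because the copy meeting $P_1$ along $\Back{F_j}$ is the same copy $P_j$ meeting the picture from the opposite side, so the segment of $\partial P_1$ shared with $P_j$ appears reversed and complemented relative to the segment of $\partial P_j$ shared with $P_1$.

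The main obstacle I anticipate is the bookkeeping around the point $q$: making precise what ``$F_i$ is the last neighbor factor whose clockwise endpoint is incident to $q$'' means, and showing that exactly these three factors (and their backtracks) account for the entire boundary word with nothing left over. In particular one must rule out the possibility that additional copies touch $q$ or that the neighbor factors overlap in unexpected ways near $q$; this is where the hypothesis that $P_1, P_2, P_3$ appear \emph{clockwise consecutively} around $q$ does the real work, forcing the three contact segments to partition a neighborhood of $q$ on $\partial P_1$. I expect that a clean way to handle this is to reuse the internal lemmas from~\cite{Beauquier-1991} establishing that neighbor factors tile the boundary word and that the local picture at a vertex where several copies meet is determined by a rotation system; once that structure is in hand, both directions reduce to matching up indices, and the correspondence between $F_i$ on $P_i$ and $\Back{F_i}$ on the neighbor sharing that contact is immediate from the definition $\Back{X} = \Comp{\Rev{X}}$ and the opposite-orientation remark preceding Lemma~\ref{lem:tiling-translation}.
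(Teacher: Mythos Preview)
The paper does not prove this lemma at all: it is stated as Corollary~3.2 of~\cite{Beauquier-1991} and imported verbatim, with no proof or proof sketch given in the present paper. There is therefore nothing in the paper to compare your proposal against.

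That said, your sketch is a plausible outline of the argument one finds in Beauquier--Nivat, and it correctly identifies the two ingredients that do the work: the surrounding construction for the forward direction (placing translates along the paired factors $F_i/\Back{F_i}$ so that three of them close up around a vertex), and for the converse the fact that neighbor factors partition the boundary word together with the observation that a shared boundary segment appears as $X$ on one copy and $\Back{X}$ on the other. Your anticipated obstacle is the right one: the delicate point is really the local combinatorics at $q$, namely that exactly three copies meet there and their contact segments account for the full cycle, and in \cite{Beauquier-1991} this is handled by the structural lemmas on surroundings (their Lemma~3.2 and the material around it) rather than reproved from scratch. If you intend to write this out in full you will need those lemmas or equivalents; the paper under review simply assumes the reader accepts the cited corollary.
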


\begin{lemma}
\label{lem:translation-bijection}
Let $P$ be a polyomino.
A factorization of $\Bou{P}$ is a BN factorization if and only if a regular tiling of $P$ has this neighbor factorization.  
\end{lemma}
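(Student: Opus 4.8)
The plan is to prove the two directions separately, in each case reducing to the already-established Lemma~\ref{lem:exact-contact}, which links a factorization $\Bou{P}=F_1\Back{F_3}F_2\Back{F_1}F_3\Back{F_2}$ to a configuration of three copies around a common point $q$. The key observation is purely syntactic: the word $F_1\Back{F_3}F_2\Back{F_1}F_3\Back{F_2}$ is, up to a cyclic shift and a renaming of blocks, the same as $ABC\Back{A}\Back{B}\Back{C}$. Concretely, setting $A=F_1$, $B=\Back{F_3}$, $C=F_2$ and using $\Back{\Back{F_3}}=F_3$, the word $F_1\Back{F_3}F_2\Back{F_1}F_3\Back{F_2}$ becomes exactly $ABC\Back{A}\Back{B}\Back{C}$. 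So ``$\Bou{P}$ admits a BN factorization'' and ``$\Bou{P}$ admits a factorization of the form in Lemma~\ref{lem:exact-contact}'' are equivalent statements about the circular word $\Bou{P}$; this equivalence is the first step and is essentially a relabeling, though I would state it carefully because the blocks are circular factors and one must check the cyclic shift is harmless.

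For the forward direction, suppose $\Bou{P}=ABC\Back{A}\Back{B}\Back{C}$ is a BN factorization. By the relabeling above it has the form of Lemma~\ref{lem:exact-contact}, so there is a tiling $\mathscr{T}$ with three copies $P_1,P_2,P_3$ meeting clockwise-consecutively at a point $q$, with $F_i$ the last neighbor factor of $P_i$ ending at $q$. I would then argue that $\mathscr{T}$ is in fact regular and that the common neighbor factorization of all its copies is the given one. The idea: the three copies $P_1,P_2,P_3$ together with $P$ itself already pin down four lattice translations, and the matching pattern $X\leftrightarrow\Back{X}$ forces every copy to have exactly the same six neighbors arranged the same way; concretely, each block $X\in\{A,B,C\}$ of $\Bou P$ is glued to the block $\Back X$ of a neighboring copy, and iterating this gluing relation generates a translation lattice $\vec o+\{i\vec u+j\vec v\}$. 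Here I would lean on the internal machinery of the proof of Lemma~\ref{lem:tiling-translation} in~\cite{Beauquier-1991} (as the statement of our lemma explicitly permits), specifically that the surrounding of one copy by six neighbors propagates consistently to a tiling of the whole plane, and that this propagated tiling is generated by two translation vectors read off from the factorization.

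For the reverse direction, suppose $\mathscr T$ is a regular tiling of $P$ and $W=\Bou P$ has the associated common neighbor factorization. A regular tiling is generated by a lattice, so every copy has the same (at most six, generically six) neighbors; pick any copy $P_0$ and three of its neighbors occupying three clockwise-consecutive positions around a vertex $q$ of $P_0$. These three neighbors and $P_0$ realize the configuration of Lemma~\ref{lem:exact-contact} (with $P_0$ playing an auxiliary role), so $W$ has a factorization $F_1\Back{F_3}F_2\Back{F_1}F_3\Back{F_2}$; relabeling as in the first step gives a BN factorization, and one checks it is the neighbor factorization we started with rather than some unrelated one—this is where I'd verify that the $F_i$ supplied by Lemma~\ref{lem:exact-contact} are precisely the neighbor factors of the regular tiling.

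The main obstacle I anticipate is the bookkeeping in the forward direction: Lemma~\ref{lem:exact-contact} only hands us \emph{three} copies around \emph{one} point, and upgrading this to a genuinely \emph{regular} (lattice) tiling whose common neighbor factorization is exactly $ABC\Back A\Back B\Back C$ requires showing the local six-neighbor surrounding extends globally and periodically. I expect this to follow by reusing the surrounding-extends-to-a-tiling argument already internal to the proof of Lemma~\ref{lem:tiling-translation}, together with the elementary fact that a tiling in which one tile is surrounded in a translation-periodic way is forced to be the full lattice tiling; but making the ``common to all copies'' claim precise—i.e. that every copy, not just $P_0$, sees the factorization $ABC\Back A\Back B\Back C$—is the step that needs the most care.
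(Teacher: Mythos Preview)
Your proposal is correct and follows essentially the same route as the paper: reduce both directions to Lemma~\ref{lem:exact-contact} via the relabeling $A=F_1$, $B=\Back{F_3}$, $C=F_2$, and for the forward direction upgrade the local three-copy configuration to a regular tiling by invoking the surrounding-extends-to-a-lattice-tiling machinery from~\cite{Beauquier-1991}. The paper resolves the obstacle you flag by citing two specific results from~\cite{Beauquier-1991} (their Lemma~3.2, which shows the six translates form a surrounding with the desired neighbor factors, and their Corollary~3.1, which extends any surrounding to a regular tiling), so rather than arguing that the tiling $\mathscr{T}$ handed to you by Lemma~\ref{lem:exact-contact} is itself regular, you should build a fresh regular tiling from the surrounding.
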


\begin{proof}
The factorization $\Bou{P} = F_1 \Back{F_3} F_2 \Back{F_1} F_3 \Back{F_2}$ is a generic BN factorization.
So it suffices to prove that there exists a tiling $\mathscr{T}$ of $P$ satisfying the conditions of Lemma~\ref{lem:exact-contact} if and only if there exists a regular tiling $\mathscr{T}_{\rm{reg}}$ with neighbor factorization $\Bou{P} = F_1 \Back{F_3} F_2 \Back{F_1} F_3 \Back{F_2}$.

\textbf{Tiling $\Rightarrow$ neighbor factorization.}
Let $\mathscr{T}$ be a tiling and $P_1, P_2, P_3 \in \mathscr{T}$ be copies as defined in the statement of Lemma~\ref{lem:exact-contact}.
Let $\vec{u}$ and $\vec{v}$ be the amount $P_2$ and $P_3$ are translated relative to $P_1$, respectively.
Lemma 3.2 of~\cite{Beauquier-1991} states that the copies obtained by translating $P_1$ by $\vec{u}, \vec{v}, \vec{v}-\vec{u}$, $-\vec{u}$, $-\vec{v}$, and $\vec{u}-\vec{v}$ is a \emph{surrounding} of $P_1$: a set of interior-disjoint copies such that every edge of $C_1$ is shared by a copy.
Since $P_3$ is a copy of $P_2$ translated by $\vec{v}-\vec{u}$, the neighbor factor of $P_1$ incident to the copy translated by $\vec{v}-\vec{u}$ is $F_2$.
By similar reasoning, $P_1$ has neighbor factors $\Back{F_1}$, $F_3$, and $\Back{F_2}$ incident to the copies translated by $-\vec{u}$, $-\vec{v}$, and $\vec{u}-\vec{v}$, respectively.
So $P_1$ has neighbor factorization $\Bou{P} = F_1 \Back{F_3} F_2 \Back{F_1} F_3 \Back{F_2}$.
Corollary 3.1 of~\cite{Beauquier-1991} states that for every surrounding, there exists a regular tiling of $P$ containing the surrounding and thus has the neighbor factorization of $P_1$. 

\textbf{Tiling $\Leftarrow$ neighbor factorization.}
Now suppose there exists a regular tiling $\mathscr{T}_{\rm{reg}}$ of $P$ with neighbor factorization $F_1 \Back{F_3} F_2 \Back{F_1} F_3 \Back{F_2}$.
Let $P_1 \in \mathscr{T}_{\rm{reg}}$ be a copy and $q$ be the clockwise endpoint of the factor $F_1$ of $P_1$.
Let $P_2, P_3 \in \mathscr{T}_{\rm{reg}}$ be copies adjacent to $P_1$ and incident to factors $F_1$ and $\Back{F_3}$ of $P_1$.
Let $\vec{u}$ and $\vec{v}$ be the amount $P_2$ and $P_3$ are translated relative to $P_1$, respectively.
Then $q$ is the clockwise endpoint of the factor $F_2$ of $P_1$, translated by $\vec{u}$.
Also, $q$ is the clockwise endpoint of the factor $F_3$ translated by, translated by $\vec{v}$.
So the factors of $P_2$ and $P_3$ whose clockwise endpoints are $q$ are $F_2$ and $F_3$, respectively.
\end{proof}

\section{A Bound on the Number of Factorizations}
\label{sec:combinatorial}

Here we prove that the number of BN factorizations of the boundary word of an $n$-omino is $O(n)$.
This fact is used in Section~\ref{sec:combinatorial} to improve the bound on the running time the algorithm from $O(n+k)$ to $O(n)$.

\begin{lemma}
\label{lem:presuf-period}
Let $W$ be a boundary word with a factor $X$.
Let $P,S \Mirror W$ such that $P \Prefix X$, $S \Suffix X$, and $P \neq S$.
Then $X$ has a period of length $2|X|-(|P|+|S|)$. 
\end{lemma}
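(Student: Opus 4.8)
The plan is to read each of the two mirror conditions as a ``reflect-and-complement'' involution on the positions of the circular word $W$, and then to use the elementary fact that the composition of two reflections is a translation. Because the two complementations cancel, this composition is an honest translation of $W$, and its displacement turns out to be exactly $d := 2|X| - (|P|+|S|)$, which is the periodicity we want. In spirit this is the backtrack analogue of the classical fact that a word with a palindromic prefix $P$ and a palindromic suffix $S$ whose lengths sum to at least its own length is periodic with period twice its length minus $|P|+|S|$.

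To make this precise, rotate the circular word so that $X = W[1..|X|]$. Recall that a boundary word has even length (the $\Up$'s balance the $\Down$'s and the $\Left$'s balance the $\Right$'s), so $|W|/2$ is an integer, and that a mirror factor has length at most $|W|/2$, hence is disjoint from its backtrack. Unwinding $P \Mirror W$ together with $P \Prefix X$: the factorization $W = P\,U\,\Back{P}\,V$ with $|U| = |V|$ forces $\Back{P}$ to occupy the positions from $1+|W|/2$ through $|P|+|W|/2$, antipodal to $P$; comparing letters position by position then yields $W[i] = \Comp{W[c_P - i]}$ for every $i$ in the block of positions occupied by $P$, where $c_P := |W|/2 + |P| + 1$ (the common value of $i+i'$ over the matched pairs). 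Symmetrically, $S \Mirror W$ with $S \Suffix X$ yields $W[j] = \Comp{W[c_S - j]}$ for every $j$ in the block occupied by $\Back{S}$, where $c_S := 2|X| + |W|/2 - |S| + 1$. Consequently $c_S - c_P = 2|X| - (|P|+|S|) = d$.

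Now chain the two identities. For $i$ in the block of $P$, put $i' := c_P - i$; this is a position in the block of $\Back{P}$, and the first identity rearranges to $W[i'] = \Comp{W[i]}$. If in addition $i'$ lies in the block of $\Back{S}$, then the second identity applied at $i'$ gives $W[i'] = \Comp{W[c_S - i']} = \Comp{W[i+d]}$ (using $c_S - i' = (c_S - c_P) + i = i + d$), and the two complementations cancel to leave $W[i] = W[i+d]$. It then remains to check, using only $|P| \le |X|$ and $|S| \le |X|$, that both membership conditions --- $i$ in the block of $P$, and $i' = c_P - i$ in the block of $\Back{S}$ --- hold precisely for $i \in \{1, \dots, |X|-d\}$. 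Since that is exactly the set of indices over which the defining condition $X[i] = X[i+d]$ of a period of length $d$ must be verified, this finishes the proof. Two remarks: the hypothesis $P \neq S$ is used only to guarantee $d \ge 1$ (from $|P|,|S| \le |X|$ one has $d \ge 0$, with equality only when $P = X = S$); and when $|P|+|S| \le |X|$ the index set is empty and the statement is vacuous, so the content is the overlapping case $|P|+|S| > |X|$.

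I expect the only genuinely fiddly part to be that final range check, made slightly more delicate by the circular setting, in which positions --- and the blocks of the various mirror factors and their backtracks --- must be tracked modulo $|W|$, and one should confirm that the block of $\Back{S}$ (rather than that of $S$ itself) is the one meeting the block of $\Back{P}$ over the relevant range. Everything else is a direct unwinding of the definitions of mirror factor, backtrack, and period.
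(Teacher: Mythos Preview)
Your argument is correct and is essentially the same as the paper's: both chain the $P$-mirror identity with the $S$-mirror identity so that the two complementations cancel and one is left with $X[i] = X[i+d]$ for $1 \le i \le |X|-d$. The paper carries this out by introducing the antipodal factor $X'$ (with $\Back{P} \Prefix X'$ and $\Back{S} \Suffix X'$) and using relative indices, whereas you use absolute positions in $W$ and the reflection centers $c_P$, $c_S$; the underlying computation is identical, and your ``composition of two reflections is a translation by $c_S - c_P = d$'' is a clean way to phrase it.
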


\begin{proof}
Since $P$ and $S$ are mirror, there exists $X' \Factor W$ with $|X'| = |X|$, $\Back{P} \Prefix X'$, and $\Back{S} \Suffix X'$.
Observe that $X$ has a period of length $r \geq 1$ if and only if $X[i] = X[i+r]$ for all $1 \leq i \leq |X|-r$.
Let $1 \leq i \leq |P|+|S|-|X|$.
Then $1 \leq |P|+1-i \leq |X|$ and $1 \leq |P|+1+|\Back{S}|-|X'|-i \leq |\Back{S}|$.
So:
\begin{equation*}
\begin{split}
X[i] &= P[i] \\
&= \Comp{\Back{P}}[|P|+1-i] \\
&= \Comp{X'}[|P|+1-i] \\
&= \Comp{\Back{S}}[|P|+1+|\Back{S}|-|X'|-i] \\
&= \Comp{\Back{S}}[|\Back{S}|+1-(i+|X'|-|P|)] \\
&= S[i+|X'|-|P|] \\
&= X[i+|X'|-|P|+(|X|-|S|)] \\
&= X[i+2|X|-(|P|+|S|)]
\end{split}
\end{equation*}
Since $P \neq S$, $2|X|-(|P|+|S|) \geq 2|X|-(2|X|-1) = 1$.
So $X$ has a period of length $2|X|-(|P|+|S|)$.
\end{proof}

\begin{lemma}
\label{lem:long-middle-nonadmissible}
Let $W$ be a boundary word with $X \Factor W$.
Let $P, S \Mirror W$ such that $P \Prefix X$, $S \Suffix X$, and $P \neq S$.
Any factor $Y \Middle X$ with $|Y| > 2|X|-(|P|+|S|)$ is not an admissible factor of $W$. 
\end{lemma}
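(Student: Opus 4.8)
The plan is to combine Lemma~\ref{lem:presuf-period} with the observation (stated just after the definition of admissible) that an admissible mirror factor is the *maximum-length* mirror factor centered at its center. Set $r = 2|X|-(|P|+|S|)$; by Lemma~\ref{lem:presuf-period}, $X$ has a period of length $r$. Suppose toward a contradiction that $Y \Middle X$ with $|Y| > r$ is an admissible factor of $W$. Since $Y$ is a mirror factor, we have $W = Y_\ell \, Y \, U \, \Back{Y} \, V \, Y_r$ for suitable factors (using the circularity of $W$), and because $Y$ is admissible, $Y$ is the longest mirror factor with its center, meaning $Y$ cannot be extended on either side while remaining a mirror factor. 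The strategy is to use the period of $X$ of length $r < |Y|$ to exhibit a strictly longer mirror factor than $Y$ with the same center, contradicting admissibility.

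First I would record precisely where $Y$ sits inside $X$: since $Y \Middle X$, there is at least one letter of $X$ immediately before $Y$ and at least one immediately after $Y$ (this is exactly the content of $Y$ being a middle, not an affix). Because $X$ has period $r$ and $|Y| > r$, the letter of $X$ immediately preceding $Y$ equals $X[j - r]$ where $j$ is the start index of $Y$ in $X$; but $j - r$ lands inside $Y$ (as $|Y| > r$ forces $j \le$ the last index of $Y$ minus $r$... more carefully, $j + |Y| - 1 - r \ge j$, so position $j+|Y|-1-r$ is inside $Y$, and similarly $j+|Y|-r$, the position just after $Y$, satisfies $X$-periodicity with position $j+|Y|-r - r$, which for $|Y| > r$ may or may not be inside $Y$ — the cleaner route is the next sentence). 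The key point I would push: the period relation $X[i] = X[i+r]$ for all valid $i$ means that the one-letter extension of $Y$ on the left, $X[j-1]$, equals $X[j-1+r]$, a position strictly inside $Y$ (since $1 \le j-1+r-j+1 = r < |Y|$, wait — need $j-1+r$ between $j$ and $j+|Y|-1$, i.e. $r \le |Y|$, which holds since $|Y| > r$). By the same token the right extension $X[j+|Y|]$ equals $X[j+|Y|-r]$, also strictly inside $Y$. This lets me compare the extended factor $Y' = X[j-1 \,..\, j+|Y|]$ with $\Back{Y}$ letter-by-letter and conclude $Y'$ is still a mirror factor with the same center as $Y$, contradicting that $Y$ is the maximum-length such factor.

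The main obstacle I expect is the bookkeeping: I must verify that both one-letter extensions of $Y$ stay within $X$ (guaranteed by $Y \Middle X$) and that the periodicity indices used for those extensions land within the range where Lemma~\ref{lem:presuf-period}'s period relation is valid — i.e. within $X$, not spilling past its ends. A secondary subtlety is checking that extending $Y$ on both sides by one letter is consistent with extending $\Back{Y}$ correspondingly so that the congruence $Y' \equiv \Back{Y'}$ with a possibly-shifted but still balanced $U', V'$ actually holds; here I would lean on the fact that $\Back{X}$ also has period $r$ (since $\Back{\cdot}$ preserves periods up to reversal/complementation) and that $\Back{Y}$ sits inside the mirror copy $X'$ of $X$ from the proof of Lemma~\ref{lem:presuf-period}. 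Once those index checks are dispatched, the contradiction with maximality of admissible factors is immediate.
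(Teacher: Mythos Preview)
Your plan is correct and is essentially the paper's argument: both invoke Lemma~\ref{lem:presuf-period} to get a period $r$ on $X$, transfer that period to the ``mirror copy'' $X'$ sitting $|W|/2$ away, and then use $|Y|>r$ together with $Y\Middle X$, $\Back{Y}\Middle X'$ to read off the letters immediately adjacent to $Y$ and to $\Back{Y}$.

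Two remarks. First, the detour through maximality is unnecessary and doubles your work. The paper simply computes $U[1]=Y[|Y|+1-r]$ from the periodicity of $X$ and $U[-1]=\Back{Y}[r]$ from the periodicity of $X'$, so $U[1]=\Comp{U[-1]}$; this already violates the \emph{definition} of admissible, and there is no need to also establish the $V$ condition in order to exhibit a strictly longer mirror factor. Second, your justification for the period on the mirror side is slightly off: $X'$ is the factor of $W$ whose center lies $|W|/2$ from the center of $X$, and in general $X'\not\equiv\Back{X}$, so the observation that $\Back{\,\cdot\,}$ preserves period length does not by itself give you what you need. The correct move (which the paper makes) is to note that $\Back{P}\Prefix X'$ and $\Back{S}\Suffix X'$ are again mirror factors of $W$, and then to reapply Lemma~\ref{lem:presuf-period} to $X'$ to obtain the period of length $r$ there.
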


\begin{proof}
By Lemma~\ref{lem:presuf-period}, $X$ has a period of length $r = 2|X|-(|P|+|S|)$.
Let $Y \Middle X$ and $|Y| > r$.

Let $X' \Factor W$ with $|X'| = |X|$ and the center of $X'$ exactly $|W|/2$ letters from the center of $X$.
Then $\Back{P} \Prefix X'$, $\Back{S} \Suffix X'$, and $\Back{Y} \Middle X'$.
Again by Lemma~\ref{lem:presuf-period}, $X'$ has a period of length $r$.

Let $U, V \Factor W$ such that $W = YU\Back{Y}V$.
Since $Y$ is a middle factor of $X$, the letter $U[1]$ is in $X$.
Since $X$ has a period of length $r$ and $|Y| > r$, $U[1] = Y[|Y|+1-r] = \Comp{\Back{Y}[r]}$.
Since $\Back{Y}$ is a middle factor of $X'$ and $X'$ has a period of length $r$, $U[-1] = \Back{Y}[r]$.
So $U[1] = \Comp{U[-1]}$ and $Y$ is not admissible.
\end{proof}

\begin{lemma}
\label{lem:constant-big-admissible}
Let $W$ be a boundary word.
There exists a set $\mathscr{F}$ of $O(1)$ factors of $W$ such that every $F \Admissible W$ with $|F| \geq |W|/6$ is an affix factor of an element of $\mathscr{F}$. 
\end{lemma}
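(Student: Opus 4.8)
The plan is to build $\mathscr{F}$ by showing that the admissible factors of length at least $|W|/6$ split into a constant number of nested chains, each contained in a single host factor of $W$, and to take $\mathscr{F}$ to be the set of host factors.

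If $W$ has no admissible factor of length $\ge |W|/6$, take $\mathscr{F} = \emptyset$; otherwise let $\mathscr{G}$ be the set of such factors. First I would record two structural facts. (i) If $F \in \mathscr{G}$ then $\Back{F}$ is again admissible of the same length and occupies the boundary arc diametrically opposite that of $F$: if $W = F U \Back{F} V$ witnesses admissibility of $F$, then the cyclic rotation $W \equiv \Back{F}\, V\, F\, U$ witnesses it for $\Back{F}$, since $\Back{\Back{F}} = F$ and the two gaps are $V,U$ of equal length. Hence $F$ and $\Back{F}$ jointly cover at least $|W|/3$ of the boundary, so at most six members of $\mathscr{G}$ can have pairwise disjoint arcs; consequently the union of the arcs of the members of $\mathscr{G}$ is a union of at most six arcs (``clusters''), invariant under the antipodal involution. (ii) If $F, G \in \mathscr{G}$ begin at a common boundary vertex with $|F| < |G|$, then $\Back{F}$ and $\Back{G}$ also begin at a common vertex, so $\Back{F}$ is simultaneously a prefix and a suffix of $\Back{G}$, forcing $\Back{G}$ — and hence $G$ — to have period $|G| - |F|$; symmetrically for factors ending at a common vertex.

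Next I would work inside a fixed cluster $X$, regarded as a factor of $W$. Let $P$ be the longest member of $\mathscr{G}$ that is a prefix of $X$ and $S$ the longest that is a suffix; then every member of $\mathscr{G}$ inside $X$ beginning at the first vertex of $X$ is a prefix of $P$, and symmetrically for $S$. When $P \ne S$, both are mirror factors, so Lemma~\ref{lem:presuf-period} gives that $X$ has period $r := 2|X| - (|P| + |S|)$, and Lemma~\ref{lem:long-middle-nonadmissible} gives that no middle factor of $X$ longer than $r$ is admissible. So as soon as $r < |W|/6$, every member of $\mathscr{G}$ contained in $X$ is an affix of $X$ — a prefix of $P$ or a suffix of $S$ — and it suffices to put $P$ and $S$ into $\mathscr{F}$; summing over the at most six clusters gives $|\mathscr{F}| = O(1)$.

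The hard part will be forcing $r < |W|/6$, i.e., ruling out that $P$ and $S$ are joined across a cluster while overlapping too little — the behaviour exhibited by a long thin rectangle, inside which $\Theta(n)$ admissible run-factors all lie in a single periodic factor. I expect to resolve it by using fact (ii), transported both to the cluster and (via the mirror property) to its antipode, to produce a strictly smaller period whenever $r$ is too large, replacing $X$ by $O(1)$ shorter host factors and iterating; a Fine--Wilf argument together with the constraint that $W$ traces a simple closed curve should keep the number of iterations, and hence of hosts, bounded. The same mechanism should dispose of the degenerate case in which a cluster is itself a single admissible factor (so $P = S$), whose members of $\mathscr{G}$ are then run-like sub-factors splitting into two prefix/suffix chains. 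The constant $|W|/6$ is what makes fact (i) keep the cluster count $O(1)$ while still leaving room for this periodic bookkeeping.
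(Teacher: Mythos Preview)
Your decomposition into ``clusters'' of overlapping arcs is where the argument breaks down, and you have correctly identified the break yourself: once a cluster $X$ is allowed to be long, the quantity $r = 2|X| - (|P| + |S|)$ need not be below $|W|/6$, and Lemma~\ref{lem:long-middle-nonadmissible} gives you nothing. The paragraph beginning ``The hard part will be forcing $r < |W|/6$'' is not a proof but a hope: you invoke fact~(ii), Fine--Wilf, and the simple-closed-curve constraint without saying what statement you would actually prove about the iteration, why the number of host factors stays bounded, or why the process terminates. In the rectangle example you cite, a single cluster has length close to $|W|/2$ and contains $\Theta(|W|)$ admissible factors whose starting points sweep through essentially the whole cluster; explaining why these are affixes of $O(1)$ hosts requires real work that your sketch does not supply.

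The paper sidesteps this difficulty by clustering on \emph{centers} rather than on arcs. It partitions $W$ into $15$ consecutive factors $I_1,\dots,I_{15}$, each of length at most $|W|/14$, and groups admissible factors by which $I_i$ contains their center. If $P_1 \Prefix X$ and $P_2 \Suffix X$ have centers at distance $d \le |W|/14$, then $2|X| - (|P_1|+|P_2|) = 2d \le |W|/7 < |W|/6$, so Lemma~\ref{lem:long-middle-nonadmissible} applies immediately and every long admissible middle factor of $X$ is ruled out. A short secondary case (when the shortest containing factor $G$ is itself a member of the group) produces one more host $G'$, giving at most two hosts per $I_i$ and $|\mathscr{F}| \le 30$. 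The point is that bounding the distance between centers, rather than merely forcing overlap, is exactly what controls $r$; your arc-overlap clustering loses this control, and the repair you propose is not carried out.
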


\begin{proof}
\textbf{A special case on three factors.}
Let $P_1, P_2, P_3 \Admissible W$ with $|P_1|, |P_2|, |P_3| \geq |W|/6$ and centers contained in a factor of $W$ with length at most $|W|/14$.
Let $X \Factor W$ be the shortest factor such that $P_1, P_2, P_3 \Factor X$, and so $P_i \Prefix X$ and $P_j \Suffix X$ for some $i, j \in \{1, 2, 3\}$. 
We prove that if $i \neq j$, then $P_1, P_2, P_3 \Affix X$.

Without loss of generality, suppose $i = 1$, $j = 2$ and so $P_3 \Middle X$.
By Lemma~\ref{lem:long-middle-nonadmissible}, since $P_3 \Admissible W$, $|P_3| \leq 2|X| -(|P_1|+|P_2|) \leq |P_1| + |W|/7 + |P_2| - (|P_1|+|P_2|) = |W|/7 < |W|/6$, a contradiction.
So $P_3 \Affix X$.

\textbf{All nearby factors.}
Consider a set $\mathscr{I} = \{F_1, F_2, \dots, F_m\}$ of at least three admissible factors of $W$ of length at least $|W|/6$ such that the centers of the factors are contained in a common factor of $W$ of length $|W|/14$.
We will prove that every element of $\mathscr{I}$ is an affix factor of one of two factors of $W$.

Let $G \Factor W$ be the shortest factor such that $F_i \Factor G$ for every $F_i \in \mathscr{I}$.
It is either the case that there exist distinct $F_l, F_r \in \mathscr{I}$ with $F_l \Prefix G$, $F_r \Suffix G$, or that $G \in \mathscr{I}$ and every $F_i \in \mathscr{I}$ besides $G$ has $F_i \Middle G$. 

In the first case, $F_i \Affix G$ for any $i \neq l, r$ by the previous claim regarding three factors.
Also $F_l, F_r \Affix G$.
So every factor in $\mathscr{I}$ is an affix factor of $G$.

In the second case, let $G' \Factor G$ be the shortest factor with the same center as $G$ such that every factor in $\mathscr{I}$ excluding $G$ is a factor of $G'$.
Clearly $G' \Mirror W$ and $G' \not \Admissible W$.
Without loss of generality, there exists $F_p \in \mathscr{I}$ such that $F_p \Prefix G'$.
Since $F_p \Admissible W$ and $G' \not \Admissible W$, $F_p \neq G'$.

Applying Lemma~\ref{lem:long-middle-nonadmissible} with $X = G'$, $P = F_p$, $S = G'$, every middle factor of $G'$ in $\mathscr{I}$ has length at most $2|G'|-(|G'|+|F_p|) \leq |G'| - |F_p| \leq |W|/7 < |W|/6$.
So every factor of $G'$ in $\mathscr{I}$ is an affix factor of $G'$.
Thus every factor in $\mathscr{I}$ is either $G$ or an affix factor of $G'$.

\textbf{All factors.}
Partition $W$ into 15 factors $I_1, I_2, \dots, I_{15}$ each of length at most $|W|/14$.
Let $\mathscr{I}_i$ be the set of admissible factors with centers containing letters in $I_i$.
Then by the previous claim regarding more than three factors, there exists a set $\mathscr{F}_i$ ($G$ and possibly $G'$) such that every element of $\mathscr{I}_i$ is an affix factor of an element of $\mathscr{F}_i$ and $|\mathscr{F}_i| \leq 2$.
So every $F \Admissible W$ with $|F| \geq |W|/6$ is an affix factor of an element of $\mathscr{F} = \bigcup_{i=1}^{15}{\mathscr{F}_i}$ and $|\mathscr{F}| \leq 2\cdot15$.
\end{proof}

\begin{theorem}
\label{thm:linear-factorizations}
A boundary word $W$ has $O(|W|)$ BN factorizations.
\end{theorem}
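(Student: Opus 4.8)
The plan is to reduce a BN factorization, up to a bounded overcount, to one of $O(|W|)$ pieces of data, using Lemma~\ref{lem:constant-big-admissible} to control the ``large'' part of the factorization and Lemma~\ref{lem:presuf-period} to control the interaction of the remaining parts.

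First I would record the structural backbone. In any BN factorization $W = ABC\Back A\Back B\Back C$ each of the six arcs is a mirror factor of $W$: e.g.\ $W = A\,(BC)\,\Back A\,(\Back B\Back C)$ with $|BC| = |\Back B\Back C|$ exhibits $A \Mirror W$, and cyclic rotations of the factorization (as in the proof of Lemma~\ref{lem:translation-bijection}) handle $B$ and $C$. Since $|A|+|B|+|C| = |W|/2$, some arc has length $\geq |W|/6$, and cyclically rotating the factorization (which costs only a factor of $3$) lets me assume $A$ is such a longest arc. The crucial point is that when $A,B,C$ are all nonempty, $A$ is an \emph{admissible} factor of $W$: extending the mirror factor $A$ by one letter on each side and staying a mirror factor would, because the condition $|U|=|V|$ in the definition of a mirror factor forces the partner of the extension to sit at the antipode of $A$, require the letter of $W$ just after $\Back A$ to be the complement of the letter just before $A$, i.e.\ $B[-1] = \Comp{C[1]}$; but $B[-1]$ and $C[1]$ are cyclically consecutive letters of $W$, and a simple polyomino boundary word has no factor $x\Comp x$. (The same for $B,C$ by rotation; the factorizations with a length-$0$ arc correspond to tilings by a rank-reduced lattice and are counted separately, contributing $O(|W|)$ by a variant of the argument below.) Thus, up to constant factors, it suffices to count ordered triples $(A,B,C)$ of consecutive admissible factors with $|A|+|B|+|C| = |W|/2$ and $|A|$ maximal; and then, since $|A|\ge|W|/6$, Lemma~\ref{lem:constant-big-admissible} says $A$ is an affix of one of the $O(1)$ factors $G\in\mathscr F$.

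Next I would fix $G$ and bound the number of such triples whose first arc $A$ is, say, a prefix of $G$. Fixing $G$ pins the left endpoint of $A$, so $A$ is determined by $|A|$; and once $A$ is fixed, a completion is a choice of the admissible arc $B$ starting where $A$ ends, with $C$ forced to be the arc from the end of $B$ to the antipode of the start of $A$, which must also be admissible. There are at most $|B|+|C| = O(|W|)$ completions per $A$, giving the naive bound $O(|W|^2)$, and removing this extra factor is the heart of the proof. Here I would argue: if one $A$ admits many completions, the two arcs flanking $B$ (the one ending where $A$ ends, and the one beginning where $C$ begins) each contain many nested admissible prefixes resp.\ suffixes; feeding these into Lemma~\ref{lem:presuf-period} shows that the whole region between $A$ and its antipodal copy $\Back A$ is periodic with some small period $p$, and that $A$ then has only $O(1+|W|/p)$ completions. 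Because a periodic factor realizing these completions has length $\Omega(|W|)$, only $O(1)$ such regions can occur disjointly, and the distinct longest arcs $A$ that feed a given region form a single telescoping family; summing the $O(1+|W|/p)$ over $G\in\mathscr F$ and over these families yields $O(|W|)$. (Equivalently, one aims to show directly that each admissible factor of $W$ is the middle arc $B$ of only $O(1)$ non-degenerate BN factorizations — in the rectangle case this is forced because the end-letter constraints pin $A$ and $C$ to be maximal admissible extensions — and then sums over the $O(|W|)$ admissible factors.)

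The main obstacle is precisely this last step: ruling out the quadratic blow-up. A single longest arc genuinely participates in $\Theta(|W|)$ factorizations (a long thin rectangle), so one cannot hope for $O(1)$ completions per arc; instead the argument must show such a ``heavy'' arc is essentially unique within its periodic neighbourhood, that there are $O(1)$ heavy neighbourhoods, and that they carry disjoint budgets. Managing the bookkeeping of overlapping periodic stretches — so that the per-arc counts $O(1+|W|/p)$ add up to $O(|W|)$ rather than to $O(|W|\log|W|)$ or worse — is where the real work lies, and is presumably why Lemmas~\ref{lem:presuf-period}–\ref{lem:constant-big-admissible} are developed in the preceding section.
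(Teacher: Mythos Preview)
Your setup is right: one arc has length at least $|W|/6$, and by Lemma~\ref{lem:constant-big-admissible} that arc is an affix of some $G\in\mathscr F$, so after relabelling one endpoint of $A$ (or of $C$) is pinned to an endpoint of some $G$. But from that point you take a detour that manufactures the very obstacle you then spend the rest of the proposal trying to remove, and you do not actually remove it; the periodicity bookkeeping you describe is left as a plan, not an argument.

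The step you are missing is short. Pinning the first letter of $A$ pins the first letter of the half-word $H=ABC$, and since $|H|=|W|/2$ is fixed, it pins $H$ itself. Symmetrically, pinning the last letter of $C$ pins $H$. Hence there are only $O(1)$ candidate half-words $H$ in total, two per element of $\mathscr F$. You instead keep $|A|$ as a free parameter and then try to count completions $(B,C)$ for each choice of $A$; that is what produces the spurious $O(|W|^2)$.

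Once $H$ is one of $O(1)$ fixed factors, the only remaining freedom is how to cut $H=ABC$. Here the second observation: $B$ must be admissible (Lemma~\ref{lem:translation-factors-maximal}), and an admissible factor is uniquely determined by its center. There are at most $|H|\le|W|$ possible centers inside $H$; each determines $B$, hence the two cut points, hence $A$ and $C$. So at most $O(|W|)$ factorizations per $H$, and $O(|W|)$ overall. That is the paper's entire proof.

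In particular, Lemma~\ref{lem:presuf-period} is not invoked again here; it is consumed inside the proof of Lemma~\ref{lem:constant-big-admissible}. Your alternative suggestion --- showing each admissible $B$ is the middle arc of only $O(1)$ factorizations --- is not what the paper does and would need its own justification; the rectangle example already shows a single arc can sit in $\Theta(|W|)$ factorizations, so any such statement would have to be phrased carefully about \emph{positions}, not just lengths.
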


\begin{proof}
Consider the choices for the three factors $A$, $B$, $C$ of BN factorization $W = ABC\Back{A}\Back{B}\Back{C}$.
In any factorization, some factor has size at least $|W|/6$.
By Lemma~\ref{lem:constant-big-admissible}, there exists a $O(1)$-sized set of factors $\mathscr{F}$ such that any factor with length at least $|W|/6$ is an affix factor of an element of $\mathscr{F}$.
Without loss of generality, either $|A| \geq |W|/6$ and $A$ is a prefix of a factor in $\mathscr{F}$ or $|C| \geq |W|/6$ and $C$ is a suffix of a factor in $\mathscr{F}$.

Let $H = ABC$ be the factor formed by consecutive factors $A$, $B$, $C$ of a BN factorization.
Then since $|H| = |W|/2$ and shares either the first or last letter with a factor in $\mathscr{F}$, there are $O(1)$ total factors $H$.
For a fixed $H$, choosing the center of $B$ determines $B$ (since $B$ is admissible) and thus $A$ and $C$.
So there are at most $2(|W|/2)$ factorizations for a fixed factor $H$.
\end{proof}

Since Lemma~\ref{lem:translation-bijection} proves that factorizations and tilings are equivalent, the previous theorem implies a linear upper bound on the number of regular tilings of a polyomino: 

\begin{corollary}
\label{cor:linear-tilings}
An $n$-omino has $O(n)$ regular tilings.
\end{corollary}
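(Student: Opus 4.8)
The plan is to derive Corollary~\ref{cor:linear-tilings} directly from Theorem~\ref{thm:linear-factorizations} together with the correspondence established in Lemma~\ref{lem:translation-bijection}. The idea is that each regular tiling is determined by its common neighbor factorization (up to the chosen copy $P_1$ and its placement, which are absorbed by the equivalence of tilings under translation), and by Lemma~\ref{lem:translation-bijection} these neighbor factorizations are exactly the BN factorizations of $\Bou{P}$. Hence the number of regular tilings is at most the number of BN factorizations of $\Bou{P}$, which is $O(|\Bou{P}|) = O(n)$ by Theorem~\ref{thm:linear-factorizations}, since an $n$-omino has a boundary word of length $n$ (each of the $n$ unit edges contributes one letter).

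Concretely, I would proceed as follows. First, fix an $n$-omino $P$ and note $|\Bou{P}| = n$. Second, invoke Lemma~\ref{lem:translation-bijection}: every regular tiling of $P$ has a neighbor factorization common to all its copies, and that factorization is a BN factorization of $\Bou{P}$; conversely each BN factorization arises from some regular tiling. Third, observe that a regular tiling (as an equivalence class under translation, per the definition in Section~\ref{sec:definitions}) is recovered from a BN factorization $\Bou{P} = ABC\Back{A}\Back{B}\Back{C}$ together with a choice of base copy, and the translation vectors $\vec u, \vec v$ of the neighbors are determined by the factorization (they are the displacements realizing the contacts along $A,B,C$), so at most a constant number of regular tilings correspond to each BN factorization. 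Finally, combine with Theorem~\ref{thm:linear-factorizations} to conclude the count is $O(n)$.

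The main subtlety — though it is minor given the machinery already in place — is making precise that distinct regular tilings cannot share a BN factorization in a way that blows up the count, i.e.\ that the map from BN factorizations to regular tilings is at most constant-to-one. This is essentially the content of the ``tiling $\Leftarrow$ neighbor factorization'' direction of Lemma~\ref{lem:translation-bijection}: given the factorization, the positions of $P_1, P_2, P_3$ around the common point $q$ are pinned down, and via Corollary~3.1 of~\cite{Beauquier-1991} the surrounding extends to a unique regular tiling up to translation. So the argument reduces to citing Lemma~\ref{lem:translation-bijection} and Theorem~\ref{thm:linear-factorizations} and noting $|\Bou{P}| = n$; no new estimates are needed.
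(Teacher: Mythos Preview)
Your proposal is correct and matches the paper's approach exactly: the paper simply remarks that Lemma~\ref{lem:translation-bijection} makes BN factorizations and regular tilings equivalent, so Theorem~\ref{thm:linear-factorizations} immediately gives the corollary. One minor caveat: ``$n$-omino'' conventionally means $n$ cells rather than $n$ boundary edges, so strictly $|\Bou{P}| \leq 2n+2$ rather than $|\Bou{P}| = n$, but this does not affect the $O(n)$ conclusion.
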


As pointed out by Proven\c{c}al~\cite{Provencal-2008}, it is easy to construct polyominoes with $\Omega(n)$ such tilings.
For instance, the polyomino with boundary word $W = \Up \Right^i \Down \Left^i$ with $i \geq 1$ has $|W|/2-1$ regular tilings. 

\section{An Algorithm for Enumerating Factorizations}
\label{sec:algorithm}

The bulk of this section describes a $O(|W|)$-time algorithm for enumerating the factorizations of a polyomino boundary word $W$.
The algorithm combines algorithmic ideas of Brlek, Proven\c{c}al, and F\'{e}dou~\cite{Brlek-2009a} and a structural result based on a well-known lemma of Galil and Seirferas~\cite{Galil-1978}.   

\begin{lemma}[Corollary 5 of~\cite{Brlek-2009a}] 
\label{lem:translation-factors-maximal}
Every factor of a BN factorization is admissible.
\end{lemma}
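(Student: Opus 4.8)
We must show that every factor of a BN factorization $W = ABC\Back{A}\Back{B}\Back{C}$ is an admissible mirror factor of $W$. The plan is to argue in two stages: first establish that each of $A$, $B$, $C$ (and by symmetry $\Back{A}$, $\Back{B}$, $\Back{C}$) is a mirror factor, then show that each is moreover admissible, meaning it cannot be extended on either side while remaining a mirror factor with the same center. Since the statement is attributed to \cite{Brlek-2009a} as a corollary, I expect the heavy lifting to reside in structural facts about boundary words established there, and the job here is to assemble them.

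**The mirror property.**

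First I would recall the definition: $X \Mirror W$ means $W = XUYV$ with $Y \equiv \Back{X}$ and $|U| = |V|$. Take the factorization $W = ABC\Back{A}\Back{B}\Back{C}$ and focus on $A$: here $X = A$, $U = BC$, $Y = \Back{A}$, and $V = \Back{B}\Back{C}$. We have $Y \equiv \Back{A}$ trivially (it literally is $\Back{A}$), and $|U| = |BC| = |B| + |C| = |\Back{B}| + |\Back{C}| = |V|$ since backtracking preserves length. So $A \Mirror W$, and the same computation with the factors cyclically rotated handles $B$, $C$, $\Back{A}$, $\Back{B}$, $\Back{C}$ — for instance, for $B$ write $W$ cyclically as $BC\Back{A}\Back{B}\Back{C}A$, so $U = C\Back{A}$, $Y = \Back{B}$, $V = \Back{C}A$, and again $|U| = |C| + |A| = |\Back{C}| + |\Back{A}| = |V|$. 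This part is routine bookkeeping with the circular word.

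**The admissibility property — the main obstacle.**

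The real content is admissibility: for $A$ with $W = AUYV$, $U = BC$, $V = \Back{B}\Back{C}$, we need $U[1] \neq \Comp{U[-1]}$ and $V[1] \neq \Comp{V[-1]}$. Equivalently, the letter immediately following $A$ in $W$ must not be the complement of the letter immediately preceding $\Back{A}$ in $W$, and symmetrically on the $V$ side. Here is where a non-self-intersection (simple closed curve) property of boundary words must enter: if $U[1] = \Comp{U[-1]}$, then $A$ could be extended by one letter on the right and $\Back{A}$ extended by one letter on the left to form a longer mirror factor with the same center — but I claim this longer factor would force the boundary path of $P$ to immediately backtrack on itself at that position (two consecutive letters $x$ then $\Comp{x}$ along the traversal), contradicting that $\Bou{P}$ is a simple closed curve. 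Concretely, $U[1]$ and $U[-1]$ are consecutive letters along the boundary when $U$ is read in context, or more precisely the relevant pair of letters straddling the junction between consecutive factors must be examined; the condition $U[1] = \Comp{U[-1]}$ translates, via the mirror relationship $Y \equiv \Back{X}$, into the boundary word containing a factor $x\Comp{x}$, which is impossible for a simple polyomino boundary. I would make this precise by spelling out which two boundary letters are being compared and invoking the simple-closed-curve hypothesis from the definition of polyomino. The main obstacle is getting the index arithmetic exactly right so that the forbidden pattern $x\Comp{x}$ genuinely appears on $\Bou{P}$ rather than merely on some abstract concatenation; I expect to lean on the observation already noted in the paper that "each admissible factor is the maximum-length mirror factor with its center," turning the task into showing that $A$, $B$, $C$ and their backtracks are maximal-length mirror factors at their respective centers, which is exactly the non-extendability argument above.
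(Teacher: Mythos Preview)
The paper does not prove this lemma at all: it is quoted verbatim as Corollary~5 of Brlek--Proven\c{c}al--F\'{e}dou and used as a black box. So there is no ``paper's own proof'' to compare against; your proposal has to be judged on its own.

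Your mirror-factor step is fine. The admissibility step has the right target but the wrong mechanism. You argue that if $U[1]=\Comp{U[-1]}$ then extending $A$ by one letter on the right (and $\Back{A}$ by one on the left) produces a longer mirror factor, and that this extension ``would force the boundary path of $P$ to immediately backtrack on itself at that position.'' That last inference is not justified: the extended factor $A' = A\cdot U[1]$ is a perfectly good factor of $W$, and nothing about enlarging a mirror factor creates an $x\Comp{x}$ at the endpoints of $A'$.

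The fix is to look at a \emph{different} location in $W$. With $W = ABC\Back{A}\Back{B}\Back{C}$ and $U = BC$, the condition $U[1]=\Comp{U[-1]}$ reads $B[1]=\Comp{C[-1]}$. Now examine the junction between $\Back{B}$ and $\Back{C}$ in $W$: the two consecutive letters there are $\Back{B}[-1]\,\Back{C}[1] = \Comp{B[1]}\,\Comp{C[-1]}$. Under the hypothesis $B[1]=\Comp{C[-1]}$ this pair becomes $\Comp{B[1]}\,B[1]$, an honest $x\Comp{x}$ factor of the boundary word, contradicting simplicity of $\partial P$. The $V$-condition for $A$ is handled the same way (it lands at the $B$--$C$ junction), and the remaining factors follow by cyclic symmetry. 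So every one of the twelve admissibility inequalities is literally a no-backtrack condition at one of the six junctions of the factorization---just not, in general, a junction adjacent to the factor you started with. Once you make this identification explicit, the proof is complete (with the usual care when one of $A,B,C$ is empty, where two junctions coalesce).
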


Lemma~\ref{lem:extremal-factorizations-translation} is a variation of Lemma~C4 of Galil and Seirferas~\cite{Galil-1978}.
We reproduce their proof with minor modifications.

\begin{lemma}
\label{lem:extremal-factorizations-translation}
Let $A$ and $B$ be two words of the same length.
Moreover, let $A = X_1 X_2 = Y_1 Y_2 = Z_1 Z_2$ and $B = X_Q \Back{X_2} = \Back{Y_1} \Back{Y_2} = \Back{Z_1} Z_Q$ with $|X_1| < |Y_1| < |Z_1|$.
Then $X_Q = \Back{X_1}$ and $Z_Q = \Back{Z_2}$. 
\end{lemma}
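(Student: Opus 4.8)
The plan is to push everything onto the single word $B$, reduce each of the two claims to a periodicity statement about $B$, and then apply the Fine--Wilf theorem. Write $a=|X_1|<b=|Y_1|<c=|Z_1|$, $p=b-a$, $q=c-b$, and $m=|A|=|B|$. Since $X_1$ and $Y_1$ are prefixes of $A$ with $|X_1|<|Y_1|$, I can write $Y_1=X_1D$ and hence $X_2=DY_2$ with $|D|=p$; likewise $Z_1=Y_1E$ and $Y_2=EZ_2$ with $|E|=q$. Substituting into $B=\Back{Y_1}\Back{Y_2}$ and using $\Back{UV}=\Back V\,\Back U$ gives $B=\Back D\,\Back{X_1}\,\Back{Y_2}$, while $B=X_Q\Back{X_2}=X_Q\,\Back{Y_2}\,\Back D$ and $B=\Back{Z_1}Z_Q=\Back E\,\Back{Y_1}\,Z_Q$. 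Thus $\Back{Y_2}$ occurs in $B$ both at positions $a+1,\dots,m-p$ and at positions $b+1,\dots,m$; equating these occurrences shows that $B[a+1..m]=\Back{X_2}$ has period $p$. Similarly $\Back{Y_1}$ occurs in $B$ at positions $1,\dots,b$ and at positions $q+1,\dots,c$, and equating these shows that $B[1..c]=\Back{Z_1}$ has period $q$.

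Next I would examine the overlap $B[a+1..c]$ of these two factors. It inherits both periods $p$ and $q$ and has length $c-a=p+q$, which is at least $p+q-\gcd(p,q)$, so the Fine--Wilf theorem gives $B[a+1..c]$ the period $g:=\gcd(p,q)$. A routine period-propagation argument then transfers $g$ outward: since $B[1..c]$ has period $q$ and its suffix $B[a+1..c]$ of length $\ge q$ has the divisor period $g\mid q$, the prefix $B[1..c]$ has period $g$; symmetrically, since $B[a+1..m]$ has period $p$ and its prefix $B[a+1..c]$ of length $\ge p$ has period $g\mid p$, the suffix $B[a+1..m]$ has period $g$.

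The two identities then drop out. The factor $B[1..b]$ of $B[1..c]$ has period $g$, hence period $p$, so $B[1..a]=B[p+1..b]$; the left-hand side is $X_Q$, and since $B[1..b]=\Back{Y_1}=\Back D\,\Back{X_1}$ with $|\Back D|=p$, the right-hand side is $\Back{X_1}$, whence $X_Q=\Back{X_1}$. Symmetrically the factor $B[b+1..m]$ of $B[a+1..m]$ has period $g$, hence period $q$, so $B[c+1..m]=B[b+1..m-q]$; the left-hand side is $Z_Q$, and since $B[b+1..m]=\Back{Y_2}=\Back{Z_2}\,\Back E$ with $|\Back{Z_2}|=m-c$, the right-hand side is $\Back{Z_2}$, whence $Z_Q=\Back{Z_2}$. (When $a=0$ or $c=m$ the corresponding identity is between empty words and the argument degenerates harmlessly.)

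The one genuinely non-trivial step is the passage from the two one-sided periodicities---period $p$ on a suffix of $B$ and period $q$ on a prefix of $B$---to a single short period on the whole stretch $B[a+1..c]$, and this is exactly where Fine--Wilf is invoked; it goes through because the overlap has length $p+q$, comfortably above the threshold $p+q-\gcd(p,q)$. Everything else is bookkeeping with reverse-complements, for which the only facts needed are $\Back{UV}=\Back V\,\Back U$, $\Back{\Back U}=U$, and the observation that $\Back U$ has period $r$ exactly when $U$ does.
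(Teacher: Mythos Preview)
Your argument is correct, but it takes a genuinely different route from the paper's. The paper (following Galil--Seiferas) proves $X_Q=\Back{X_1}$ using only the single period $q=|V|$, where $V$ is defined by $Z_1=Y_1V$ (your $E$). It first shows that $\Back V$ is a period of $\Back{Z_1}$, which is exactly your observation that $B[1..c]$ has period $q$. It then argues, via a short prefix/suffix chase through $A$ and $B$, that $V$ is a prefix of $X_2$, hence $X_1V$ is a prefix of $Z_1$; since $Z_1$ has period $|V|$, so does $X_1V$, and therefore $\Back V\,\Back{X_1}$ has $\Back V$ as a period. Now both $X_Q$ (a prefix of $\Back{Z_1}$) and $\Back{X_1}$ are length-$a$ prefixes of $\Back V^{\,\infty}$, hence equal. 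No Fine--Wilf and no period propagation are needed; the second period $p$ never enters the picture for this half of the lemma.

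Your approach instead sets up two overlapping periodic factors of $B$ (period $p$ on $B[a{+}1..m]$, period $q$ on $B[1..c]$), applies Fine--Wilf on the overlap $B[a{+}1..c]$ of length $p+q$ to obtain period $g=\gcd(p,q)$, and then propagates $g$ outward to force $B[1..b]$ to have period $p$. This is heavier machinery than the paper uses, but it is a clean, symmetric reduction that treats both conclusions uniformly and would adapt well to situations where a convenient ``phase-aligned'' occurrence of $\Back{X_1}$ is not available. The paper's proof, on the other hand, is shorter and entirely elementary, exploiting the specific placement of $\Back{X_1}$ inside $\Back{Z_1}$ to avoid combining the two periods at all.
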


\begin{proof}
Let $V$ be the word such that $Y_1 V = Z_1$ (see Figure~\ref{fig:galil-double-strings}).

\begin{figure}[ht]
\centering
\includegraphics[scale=1.0]{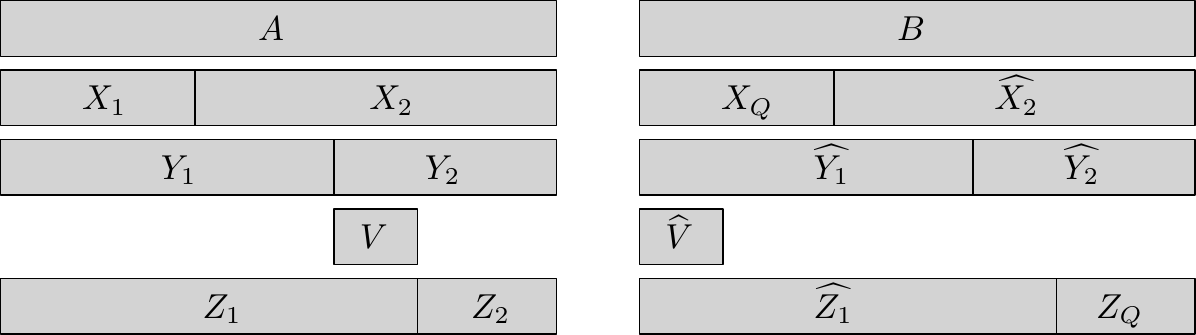}
\caption{The words used in the proof of Lemma~\ref{lem:extremal-factorizations-translation}.}
\label{fig:galil-double-strings}
\end{figure}

\textbf{Claim~(1): $\Back{V}$ is a period of $\Back{Z_1}$.}
Since $Y_1 V = Z_1$, then $\Back{Z_1} = \Back{Y_1 V} = \Back{V} \Back{Y_1}$ is a prefix of $B$.
So $\Back{Y_1}$ is a prefix of $\Back{Z_1} = \Back{V} \Back{Y_1}$ and thus $\Back{V}$ is a period of $\Back{Y_1}$.
So $\Back{V}$ is a period of $\Back{V} \Back{Y_1} = \Back{Z_1}$.

\textbf{Claim~(2): $V$ is a prefix of $X_2$.} 
Since $V$ is a prefix of $Y_2$, $\Back{V}$ is a suffix of $\Back{Y_2}$.
So $\Back{V}$ is a suffix of $\Back{X_2}$ and $V$ is a prefix of $X_2$. 

\textbf{Claim~(3): $X_1 V$ is a prefix of $Z_1$.}
Since $V$ is a prefix of $X_2$, $X_1 V$ is a prefix of $Y_1 V$.
Since $|X_1 V| < |Y_1 V| = |Z_1|$, $X_1 V$ is also a prefix of $Z_1$.

\textbf{Claim~(4): $\Back{V}$ is a period of $\Back{X_1}$.} 
By claim~(1), $\Back{V}$ is a period of $\Back{Z_1}$, so $Z_1$ has a period of length $|\Back{V}| = |V|$.
By claim~(3), $X_1 V$ is a prefix of $Z_1$ and so also has a period of length $|V|$.
Then $\Back{X_1 V} = \Back{V} \Back{X_1}$ has a period of length $|V|$, namely $\Back{V}$.
So $\Back{V}$ is also a period of $\Back{X_1}$. 
 
Finally, combining claims~(1) and~(4), since $\Back{V}$ is a period of both $X_Q$ and $\Back{X_1}$, $X_Q = \Back{X_1}$.
By symmetry, the same proof also implies $Z_Q = \Back{Z_2}$.
\end{proof}

\begin{lemma}[Theorem 9.1.1 of~\cite{Gusfield-1997}]
\label{lem:longest-common-extension}
Two non-circular words $X$, $Y$ can be preprocessed in $O(|X| + |Y|)$ time to support the following queries in $O(1)$-time:
what is the longest common factor of $X$ and $Y$ starting at $X[i]$ and $Y[j]$?
\end{lemma}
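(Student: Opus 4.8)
The plan is to reduce each query to a lowest-common-ancestor (LCA) query in a generalized suffix tree. First I would build the generalized suffix tree $T$ of $X$ and $Y$ --- concretely, the suffix tree of the single string $X\,\$_1\,Y\,\$_2$, where $\$_1$ and $\$_2$ are two fresh symbols not occurring in $X$ or $Y$. By Weiner's, McCreight's, or Ukkonen's algorithm this takes $O(|X|+|Y|)$ time. During the construction I would also record, for every internal node, its \emph{string depth}: the total length of the edge labels along the root-to-node path. Each suffix of $X$ and each suffix of $Y$ corresponds to exactly one leaf of $T$; write $\ell_X(i)$ and $\ell_Y(j)$ for the leaves corresponding to the suffix of $X$ beginning at $X[i]$ and the suffix of $Y$ beginning at $Y[j]$.

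Second, I would preprocess $T$ for constant-time LCA queries. Any rooted tree on $m$ nodes can be preprocessed in $O(m)$ time so that the LCA of any two given nodes is returned in $O(1)$ time (Harel and Tarjan, or the simpler algorithm of Bender and Farach-Colton); a suffix-array-plus-range-minimum variant would work equally well. Since $T$ has $O(|X|+|Y|)$ nodes, this preprocessing also costs $O(|X|+|Y|)$ time, matching the claimed bound.

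Third, to answer a query $(i,j)$: the longest common factor of $X$ and $Y$ that starts at $X[i]$ and at $Y[j]$ is exactly the longest common prefix of the suffix of $X$ beginning at $X[i]$ and the suffix of $Y$ beginning at $Y[j]$. In a suffix tree this longest common prefix is spelled out along the root-to-node path ending at the LCA of the two corresponding leaves, so its length is the string depth of $\mathrm{LCA}(\ell_X(i),\ell_Y(j))$. Computing that LCA takes $O(1)$ time and reading the stored string depth takes $O(1)$ time, so the whole query is answered in $O(1)$ time. The distinct sentinels $\$_1 \neq \$_2$ ensure that no common prefix can extend past the end of $X$ or of $Y$, so boundaries need no special treatment.

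The only real subtlety --- the ``hard part'' for what is otherwise a textbook fact --- is checking that the reduction is faithful: that ``longest common factor starting at $X[i]$ and $Y[j]$'' genuinely coincides with the longest common prefix of the two corresponding suffixes, and that the two distinct end markers correctly truncate a match at the end of $X$ or $Y$ instead of letting it run across the $\$_1$ boundary into $Y$. Everything else is a black-box combination of linear-time suffix-tree construction with linear-time LCA preprocessing.
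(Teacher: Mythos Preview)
The paper does not prove this lemma at all: it is stated as a citation of Theorem~9.1.1 in Gusfield's textbook and used as a black box. Your proposal is correct and is precisely the argument Gusfield gives there---build the generalized suffix tree of $X\$_1 Y\$_2$ in linear time, preprocess it for $O(1)$ LCA queries, and read off the string depth of the LCA of the two relevant leaves---so there is nothing to compare.
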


\begin{lemma}
\label{lem:translation-linear-enumerate}
Let $W$ be a polyomino boundary word.
Then the BN factorizations of $W$ can be enumerated in $O(|W|)$ time. 
\end{lemma}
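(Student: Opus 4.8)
The plan is to combine a constant-size family of ``candidate'' large factors with the fast longest-common-extension data structure to enumerate all BN factorizations in linear time, and then to argue that the total number of factorizations is itself $O(|W|)$ so the enumeration runs in $O(|W|)$ time rather than $O(|W| + k)$.

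First I would set up the candidate structure exactly as in the proof of Theorem~\ref{thm:linear-factorizations}. In any BN factorization $W = ABC\Back{A}\Back{B}\Back{C}$ at least one of $A, B, C$ has length at least $|W|/6$, and by Lemma~\ref{lem:constant-big-admissible} that long factor is an affix factor of one of $O(1)$ precomputed factors in a set $\mathscr{F}$. Also, by Lemma~\ref{lem:translation-factors-maximal}, every factor of a BN factorization is admissible, and admissible mirror factors are determined by their centers (they are the maximum-length mirror factor with a given center). So for each element $F \in \mathscr{F}$ and each way of declaring which of $A, B, C$ is long and whether it is a prefix or suffix of $F$, I get $O(1)$ candidate positions for the ``midpoint'' factor $H = ABC$ of length $|W|/2$. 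Having fixed $H$ (there are $O(1)$ of them), a BN factorization is determined by a single additional choice, namely the center of the middle factor $B$: once the center of $B$ is chosen, $B$ is the admissible factor with that center, $B$'s endpoints split $H$ into $A$ and $C$, and the matching back halves $\Back{A}, \Back{B}, \Back{C}$ are forced. This yields the $O(|W|)$ bound on the number of factorizations, matching Corollary~\ref{cor:linear-tilings}.

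Next, for the algorithmic part, I would preprocess. Form two non-circular strings from $W$ and $\Back{W}$ by concatenating each with itself (so that circular factors of length at most $|W|$ appear as ordinary factors), and run the Lemma~\ref{lem:longest-common-extension} preprocessing in $O(|W|)$ time to answer longest-common-extension queries between $W$ and $\Back{W}$ in $O(1)$ time. This lets me, in $O(1)$ time, (i) given a center, compute the maximal mirror factor with that center, and (ii) given a position $i$ and the corresponding antipodal position $i + |W|/2$, check in $O(1)$ time whether a tentative factor $X$ starting at $W[i]$ satisfies $X \equiv \Back{X'}$ where $X'$ starts at the antipodal point. I also compute in linear time, for every position, the admissible mirror factor centered there (using the longest-common-extension queries and the trivial ``walk out until $U[1] = \Comp{U[-1]}$'' condition), and store these so they can be looked up in $O(1)$. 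With Lemma~\ref{lem:constant-big-admissible}'s set $\mathscr{F}$ computed (I would treat its construction as given, or recompute it directly: partition $W$ into $15$ blocks, and within each block find the $O(1)$ dominating factors $G, G'$ via the same longest-common-extension machinery), I enumerate: for each of the $O(1)$ candidate values of $H$, iterate over all $|W|/2$ choices for the center of $B$; for each, read off $B$, hence $A$ and $C$, then in $O(1)$ time verify using longest-common-extension queries that $\Back{A}\Back{B}\Back{C}$ matches the second half of $W$ starting at the appropriate antipodal position. Each verified choice is output as a BN factorization. The total work is $O(|W|)$ preprocessing plus $O(1) \cdot O(|W|)$ for the enumeration loop, so $O(|W|)$ overall; and since by the counting argument there are only $O(|W|)$ factorizations, no additive $k$ term is needed.

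The main obstacle I expect is the bookkeeping around circular words and antipodal alignment: a BN factorization is a cyclic object, so I must be careful that ``center of $B$'' ranges over the right $|W|/2$ residues, that the factor $H$ of length $|W|/2$ opposite it is correctly indexed, and that the longest-common-extension checks between $W$ and $\Back{W}$ are performed at the correctly shifted positions so that $X \equiv \Back{X'}$ is tested and not some off-by-$|W|/2$ variant. A secondary subtlety is handling BN factorizations with zero-length factors (as in Figure~\ref{fig:translation-ex}); these must still be captured, which requires allowing $A$, $B$, or $C$ to be empty in the ``choose the center of $B$'' step and checking that the degenerate cases do not break the ``$B$ determined by its center'' argument — this is fine because an empty $B$ corresponds to $A$ and $C$ being complementary halves of $H$, a situation that only adds $O(|W|)$ extra candidates. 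Everything else is a routine assembly of the cited data structures.
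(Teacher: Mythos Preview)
Your approach is correct but takes a different route from the paper. You turn the \emph{counting} argument of Theorem~\ref{thm:linear-factorizations} directly into an algorithm: compute the $O(1)$-sized family $\mathscr{F}$ from Lemma~\ref{lem:constant-big-admissible}, derive the $O(1)$ candidate half-words $H$, and for each $H$ sweep the center of the middle factor. The paper instead never computes $\mathscr{F}$ at all. It enumerates by fixing the factor $A$ and using Lemma~\ref{lem:extremal-factorizations-translation} (the Galil--Seiferas variant), which shows that for fixed $A$ the valid pairs $(B_i,C_j)$ form two monotone ``intervals''; a two-finger scan over the length-sorted admissible prefixes and suffixes then reports each factorization in $O(1)$ amortized time, yielding an output-sensitive $O(|W|+k)$ algorithm, and only afterwards is Theorem~\ref{thm:linear-factorizations} invoked to bound $k$ by $O(|W|)$.

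What each buys: your version is conceptually cleaner---it reuses the structure of the counting proof verbatim and avoids Lemma~\ref{lem:extremal-factorizations-translation} entirely---but it requires actually \emph{computing} $\mathscr{F}$ in linear time, which you gloss over; this is doable (iterate over the $O(|W|/14)$ centers in each of the 15 blocks, track extremal endpoints to get $G$ and $G'$), but it is real work you should spell out. The paper's version is self-contained once admissible factors are in hand and gives the stronger output-sensitive guarantee as a byproduct. Two minor points on your write-up: of your six ``which of $A,B,C$ is long and prefix/suffix of $F$'' combinations, only two actually pin down $H$ (long factor first and a prefix of $F$, or long factor third and a suffix of $F$; the others leave $H$'s endpoints undetermined), and you should say how you suppress duplicate outputs arising from the cyclic relabelings, e.g.\ by a canonical-start test as the paper does.
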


\begin{proof}
Lemma~\ref{lem:translation-factors-maximal} states that BN factorizations consist entirely of admissible factors.
The algorithm first computes all admissible factors, then searches for factorizations consisting of them.

\textbf{Computing admissible factors.}
Lemma~\ref{lem:translation-factors-maximal} implies that there are at most $2|W|$ admissible factors, since admissible factor has a distinct center.
For each center $W[i..i]$ or $W[i..i+1]$, the admissible factor with this center is $LR$, where $R$ is the longest common factor of $W$ starting at $W[i+1]$ and $\Back{W}$ starting at $\Back{W}[|W|/2-(i+1)]$.
Similarly, $L$ is the longest common factor of $\Rev{W}$ starting at $\Rev{W}[|W|/2-i]$ and $\Comp{W}$ starting at $\Comp{W}[i]$. 
Preprocess $WW$, $\Back{W}\Back{W}$, $\Rev{W}\Rev{W}$, and $\Comp{W}\Comp{W}$ using Lemma~\ref{lem:longest-common-extension} so that each longest common factor can be computed in $O(1)$ time.
If $|L| \neq |R|$, then $X$ is not admissible and is discarded.
Since $O(1)$ time is spent for each of $2|W|$ admissible factors, this step takes $O(|W|)$ total time.

\textbf{Enumerating factorizations.}
Let $W = AY\Back{A}Z$ with $A$ an admissible factor and $|Y| = |Z|$.
Let $B_1, B_2, \dots, B_l$ be the admissible prefix factors of $Y$, with $|B_1| < |B_2| < \dots < |B_l|$.
Similarly, let $C_1, \dots, C_m$ be the suffix factors with $|C_1| < \dots < |C_m|$.
Lemma~\ref{lem:extremal-factorizations-translation} implies that for fixed $A$, there exist intervals $[b, l]$, $[c, m]$ such that the BN factorizations $AB_iC_j\Back{A}\Back{B_i}\Back{C_j}$ are exactly those with $i \in [b, l]$ or $j \in [c, m]$.

First, construct a length-sorted list of the admissible factors starting at each $W[k]$ in $O(|W|)$ time using counting sort.
Do the same for all factors ending at each $W[k]$.

Next, use a two-finger scan to find, for each factor $A$ that ends at $W[k]$, the longest factor $B_l$ starting at $W[k+1]$ such that $|A|+|B_l| \leq |W|/2$.
Then check whether $C_j$, the factor following $B_l$ such that $|AB_lC_j| = |W|/2$, is admissible and report the factorization $AB_lC_j\Back{A}\Back{B_l}\Back{C_j}$ if so.
Checking whether $C_j$ is admissible takes $O(1)$ time using an array mapping each center to the unique admissible factor with this center.

Additional BN factorizations containing $A$ are enumerated by checking factors $B_i$ with $i = l-1, l-2, \dots$ for an admissible following factor $C_j$.
Either $C_j$ is admissible and the factorization is reported, or $i = b-1$ and the iteration stops.

Finally, use a similar two-finger scan to find, for each factor $A$ that starts at $W[k]$, the longest factor $C_m$ that ends at $W[k+|W|/2-1]$ such that $|A|+|C_m| \leq |W|/2$, check whether $B_i$ preceeding $C_m$ such that $|AB_iC_m| = |W|/2$ is admissible, and report the possible BN factorization.
Then check and report similar factorizations with $C_j$ for $j = m-1, m-2, \dots$ until $j = c-1$.

In total, the two-finger scans take $O(|W|)$ time plus $O(1)$ time to report each factorization.
Reporting duplicate factorizations can be avoided by only reporting a factorization if $A[1]$ appears before $B[1]$, $C[1]$, $\Back{A}[1]$, $\Back{B}[1]$, and $\Back{C}[1]$ in $W$.
Then by Theorem~\ref{thm:linear-factorizations}, reporting factorizations also takes $O(|W|)$ time.
\end{proof}

Combining this algorithm with Lemmas~\ref{lem:tiling-translation} and~\ref{lem:translation-bijection} yields the desired algorithmic result:

\begin{theorem}
\label{thm:linear-time-iso-enum}
Let $P$ be a polyomino with $n$ edges.
In $O(n)$ time, it can be determined if $P$ admits a tiling and the regular tilings of $P$ can be enumerated.
\end{theorem}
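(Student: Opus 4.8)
The plan is to assemble the already-established pieces into the promised algorithm. First I would translate the geometric input into the combinatorial one: given a polyomino $P$ with $n$ edges, compute its boundary word $W = \Bou{P}$ in $O(n)$ time by a single clockwise traversal of the boundary, so that $|W| = n$. This reduces the problem entirely to questions about $W$.

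Next I would invoke the structural equivalences already proved. By Lemma~\ref{lem:tiling-translation}, $P$ admits a tiling if and only if $W$ has a BN factorization, and by Lemma~\ref{lem:translation-bijection} the BN factorizations of $W$ are in bijection with the regular (isohedral) tilings of $P$, with each BN factorization being the common neighbor factorization of the corresponding regular tiling. Therefore, to decide tileability and to enumerate all regular tilings, it suffices to enumerate all BN factorizations of $W$. Lemma~\ref{lem:translation-linear-enumerate} does exactly this in $O(|W|)$ time, so running that algorithm on $W$ costs $O(n)$; if it outputs at least one factorization, $P$ is tileable, and each factorization output is converted back to an explicit regular tiling (i.e.\ the vectors $\vec{o}, \vec{u}, \vec{v}$ generating the lattice of copies) in $O(1)$ time per tiling using the correspondence from Lemma~\ref{lem:translation-bijection}. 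Finally, Theorem~\ref{thm:linear-factorizations} (equivalently Corollary~\ref{cor:linear-tilings}) guarantees there are only $O(n)$ such factorizations, so the total output size is $O(n)$ and the enumeration genuinely runs in $O(n)$ time rather than merely $O(n+k)$.

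The proof is thus essentially a one-paragraph chaining of prior results; the only genuinely new obligation is the bookkeeping at the interface — verifying that constructing $\Bou{P}$ from the polyomino and reconstructing the tiling data from a BN factorization are both doable in linear (respectively constant) time, and that $|W| = \Theta(n)$ so the running-time bounds transfer. I expect the main (minor) obstacle to be making precise the claim that a BN factorization determines the generating vectors of the regular tiling: one reads off $\vec{u}$ and $\vec{v}$ from the translations taking the factor $A$ to $\Back{A}$ and $B$ to $\Back{B}$ (as in the proof of Lemma~\ref{lem:translation-bijection}), and $\vec{o}$ from the placement of any single copy; this is $O(1)$ arithmetic per factorization but should be spelled out so the enumeration claim is complete.
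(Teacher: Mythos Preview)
Your proposal is correct and follows essentially the same approach as the paper: the paper's proof is literally the one-line remark that combining Lemma~\ref{lem:translation-linear-enumerate} with Lemmas~\ref{lem:tiling-translation} and~\ref{lem:translation-bijection} yields the result. Your added bookkeeping about constructing $\Bou{P}$ and recovering the lattice vectors is reasonable elaboration beyond what the paper spells out, and your separate invocation of Theorem~\ref{thm:linear-factorizations} is slightly redundant since Lemma~\ref{lem:translation-linear-enumerate} already absorbs it to get $O(|W|)$ rather than $O(|W|+k)$.
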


\section*{Acknowledgments}

The author thanks Stefan Langerman for fruitful discussions and comments that greatly improved the paper, and anonymous reviewers for pointing out an error in an earlier version of the paper.

\bibliographystyle{abbrv}
\bibliography{translation}

\end{document}